\newtheorem{remark}{Remark}
\newtheorem{lemma}{Lemma}
\newcommand{\Ge}{\hat{G}}
\newcommand{\Gie}{\Ge^{-1}}
\newcommand{\krn}{\hat{k}}
\newcommand{\fft}{\mathcal{F}}
\newcommand{\mat}[1]{\begin{bmatrix}
		#1
\end{bmatrix}}
\newcommand{\diag}{\mathrm{diag}}
\DeclareMathOperator{\imag}{Im}
\DeclareMathOperator{\real}{Re}
\DeclareMathOperator{\cov}{cov}
\title{\LARGE \bf
Iterative Machine Learning for Precision Trajectory Tracking with Series Elastic Actuators
}
\author{Nathan Banka$^{1}$, W. Tony Piaskowy$^{2}$, Joseph Garbini$^{3}$, and Santosh Devasia$^{4}$ 
\thanks{*This research was conducted at the Ultra-Precision Controls Lab at the University of Washington.}
\thanks{$^{1}$Nathan Banka, Post-doctoral Research Associate, Department of 
Mechanical Engineering, University of Washington.}%
\thanks{$^{2}$W. Tony Piaskowy, PhD M.E. student, University of Washington.}%
\thanks{$^{3}$Joseph Garbini is a Professor in the Department of Mechanical Engineering at the University of Washington.}%
\thanks{$^{4}$Santosh Devasia is a Professor in the Department of Mechanical Engineering at the University of Washington.}%
}
\begin{document}

\maketitle
\thispagestyle{empty}
\pagestyle{empty}

\begin{abstract}
When robots operate in unknown environments small errors in postions can lead to large variations in the contact forces, especially with typical high-impedance designs. This can potentially damage the surroundings and/or the robot. Series elastic actuators (SEAs) are a popular way to reduce the output impedance of a robotic arm to improve 
control authority over the force exerted on the environment. However this increased control over forces with lower impedance comes at the 
cost of lower positioning precision and bandwidth. This article examines the use of an iteratively-learned feedforward command to improve position tracking when using SEAs. Over each iteration, the output 
responses of the system to the quantized inputs are used to estimate  
a linearized local system models. These 
estimated models are obtained using a complex-valued Gaussian Process Regression 
(cGPR) technique and then, used  to generate a new feedforward input 
command based on the previous iteration's error. This article 
illustrates this iterative machine learning (IML) technique for a two degree of freedom (2-DOF) robotic arm,  and demonstrates successful convergence of the IML approach to reduce the  tracking error.

\end{abstract}

\section{Introduction}
Robots are now operated in unstructured environments where good control over the 
applied force is needed, especially if the robot needs to interact with humans 
or is operating in environments where large forces can lead to damage.  Fine 
robot positioning with high bandwidth has been achieved traditionally using high 
impedance (i.e., very stiff) designs. However, such high impedance leads to poor 
force control due to large errors in force with small position deviations 
\cite{Pratt1995}. An increasingly popular solution for improved force control is 
the use of series elastic actuators (SEAs), which were introduced by Pratt and 
Williamson \cite{Pratt1995}, with further control analysis done by Robinson 
\cite{Robinson2000}. SEAs introduce an elastic element in series between the 
motor and its load to lower output impedance. Additionally,   the output force 
can be directly estimated by  measuring the deformation of the elastic element 
\cite{Pratt1995}, which can be used in safety-critical application, e.g., to 
stop the robot action if the forces are excessive.    The original elastic 
actuator concept has inspired many other designs and their use in human-robot 
interaction applications, such as rehabilitation 
\cite{Kim2014,Nagarajan2016,Li2016Adaptive}. While adequate force control is 
achieved with SEAs, this is at the cost of lower performance in terms of 
positioning precision and bandwidth,  and distrubance rejection \cite{Paine2014, 
Chawda2017} due to lowered impedance, friction nonlinearities, and backlash 
\cite{Eppinger1988, Robinson2000}.
Recent control methods attempt to remedy these problems by modeling the higher 
order system dynamics associated with the SEAs, active system identification, 
and disturbance estimation \cite{Muller2017, Sariyildiz2017, Li2016}. 
Nevertheless, feedback methods tend to have performance limitations due to the 
nonminimum-phase behavior with the elastic elements. 

Model-based methods, such as inversion to find feedforward inputs to augment the 
feedback, can improve  performance of robots with elastic joints, 
e.g.,~\cite{Lucibella_98}. Moreover, the effects of modeling uncertainty can be 
corrected using Iterative Learning Control (ILC) approaches, which 
can improve the tracking performance for repeated tasks \cite{McDaid12, Tien05, 
Butcher2010, jayati00, Tsai2013, Mishra2009, Havlicsek1999, Atkeson86}. 
Nevertheless, if the initial modeling errors are large, then the iterative 
approach could diverge \cite{Moore07_review}. This has motivated approaches that 
use the data obtained during the iteration process without an explicit model to  
improve convergence~\cite{Zou_ACC_08}. Additionally,  kernel-based Gaussian 
process regression has been used to update the models using iteration data 
\cite{Devasia17IML} and such iterative machine learning (IML) approaches have 
been experimentally evaluated  in \cite{Banka17IML}.  The extension of such IML 
methods for finding the feedforward input to improve the precision when using 
SEAs is investigated in this article. 

A challenge with  modeling multi-linked robotic systems is that the system 
dynamics are nonlinear.  To develop iterative corrections, the trajectory 
tracking control approach in this article finds localized linear system 
representations of the nonlinear system.   
The input-output data is then 
used to derive locally (linearized) models. For sufficiently-slow changes in the 
trajectory, the  localized models can be used to develop iterative input 
corrections. The local models are obtained in the frequency domain using using 
complex-valued Gaussian Process Regression (cGPR) for fitting the measured 
input-output data. An advantage of the cGPR approach is that it also provides 
potential uncertainties in the models, which can be used to design the gains in 
the  iteration law to promote convergence.  The iteration approach allows large 
changes in input updates in frequency regions where the model uncertaintly is 
small. No changes or small changes in the in the input updates are used if the 
predicted model uncertainty is large\cite{Blanken17,Liu15,Norrlof_02}.

\section{Problem Formulation and Solution}
This section introduces the iterative machine learning (IML) procedure, which 
has previously been applied to scalar linear time-invariant (LTI) systems 
\cite{Devasia17IML,Banka17IML} and presents two extensions for applications to: 
(i) linear time-varying (LTV) systems and (ii) multiple-input multiple-output 
(MIMO) systems. The application to LTV systems also allows for control of 
control-affine nonlinear systems that can be approximated as LTV systems 
parameterized about the desired trajectory.

\subsection{Iterative machine learning for scalar LTI systems}
Consider a scalar LTI system, where the input $u$ and output $y$ are related by the 
impulse response $g$ as
\begin{align}
	y(t) = (g \star u)(t), \label{eq:impulse-response}
\end{align}
where $\star$ denotes convolution in the time domain, or equivalently in frequency domain by
\begin{align}
	Y(\omega) = G(\omega) U(\omega),
    \label{eq:model_freq_response}
\end{align}
where $Y(\omega) = \fft(y( \cdot ))$, $U(\omega) = \fft(u( \cdot ))$, and $\fft$ denotes the 
Fourier transform. 
The tracking problem for this system is to find an input 
$u^*$ such that the system output $y$ is equal to the desired output 
$y_d$, i.e., $y(t) = y_d(t)$.  
%
In the iterative learning control method, 
the input signal is iteratively 
corrected; given a previous input $U_{k-1}(\omega)$ which resulted in an output 
$Y_{k-1}(\omega)$, the correction is
\begin{align}
	U_k(\omega) = U_{k-1}(\omega) + \rho(\omega) \Gie(\omega) 
	(Y_d(\omega)-Y_{k-1}(\omega)),\label{eq:iterative-update-f}
\end{align}
where $\Gie$ is an estimate of the inverse of the transfer function $G$ 
and $\rho$ is the frequency-dependent iteration gain.  It was shown in \cite{Tien05} that 
this sequence converges to the desired input $U^*(\omega)$ at frequency $\omega$ 
if and only if
\begin{align}
	|\Delta_p(\omega)| < \pi/2\label{eq:phase-criterion}\\
	0 < \rho(\omega) < 
	\frac{2\cos\Delta_p(\omega)}{\Delta_m(\omega)}\label{eq:convergence-criterion}
\end{align}
where $\Delta_m$ and $\Delta_p$ are the magnitude and phase of the modeling 
error given by
\begin{align}
	\frac{G^{-1}}{\Gie} = \Delta_m e^{j\Delta_p}.
\end{align}

Data-based modeling approaches can improve the convergence of the iterative 
control by increasing the accuracy of the inverse system model $\Gie(\omega)$ 
used in the iteration law in Eq.~\eqref{eq:iterative-update-f}. In the scalar 
LTI case, estimates of the inverse system model can be computed in the frequency 
domain as~\cite{Zou_ACC_08}
\begin{align}
	\Gie(\omega) = U(\omega)/Y(\omega).\label{eq:model-data}
\end{align}
When the data is noisy or unavailable at some frequencies, such model estimates 
can be improved by using the data to train a complex-valued Gaussian process 
regression (cGPR) model of the form
\begin{align}
	\Gie(\omega) = f(\omega) + \epsilon,
\end{align}
where $f(\omega)$ is the underlying true function and $\epsilon$ is a Gaussian 
noise term. In the Gaussian process framework, function values $f(\omega_1)$ and 
$f(\omega_2)$ are assumed to be correlated according to a kernel function 
$\krn(\omega_1,\omega_2)$ which measures the similarity of the inputs $\omega_1$ 
and $\omega_2$.  Given measured data $\Gie_T \in \mathbb{C}^n$ at frequencies 
$\Omega_T \in \mathbb{R}^n$, the resulting estimated mean $\mathbb{E}$ and 
variance $\mathbb{V}$ of the underlying function $f(\omega)$ are given 
by\cite{Rasmussen_06}
\begin{align}
	\mathbb{E}[f(\omega)] &= K(\omega,\Omega_T) K_{T}^{-1} \Gie_T(\Omega_T)
	\label{eq:Gi-mean},\\
	\mathbb{V}[f(\omega)] &= K(\omega,\omega) - K(\omega,\Omega_T) K_{T}^{-1} 
	K(\Omega_T,\omega),\label{eq:Gi-variance}
\end{align}
where $K(\Omega_1,\Omega_2)$ is the covariance matrix formed by evaluating the 
kernel $\krn(\cdot, \cdot)$ on all pair of entries from $\Omega_1$ and 
$\Omega_2$ and $K_T = K(\Omega_T,\Omega_T) + \sigma_\epsilon^2 I$  is the 
covariance matrix of the noisy training data.

\subsection{Extension to LTV case}
\label{sec:ltv}
When the system is nonlinear, a transfer function approach is not valid in 
general.  In the following, the system is approximated using an LTI model at 
every time instant $t$,  which can vary over time, i.e., 
\cref{eq:model_freq_response} as
\begin{align}
	y(t) &= (g_t \star u)(t)\\
	g_t &= h(t, p_1(t), p_2(t),\dots,p_m(t)).
\end{align}
where the local system transfer function $g_t$ can vary with time, as can the 
potentially time-varying parameters $p_1, p_2, \dots, p_m$, which are assumed to 
be  known or estimated at all times.


\subsubsection{Input update}
The updated input signal at iteration $k$ evaluated at a specific time $t_j$, 
$u_k(t_j)$, can be written as in \cref{eq:iterative-update-f} with the model $ 
{G}_{t_j}$, in the frequency domain as \begin{align}
	U_k(\omega) = U_{k-1}(\omega) + \rho(\omega) {G}_{t_j}^{-1} (\omega) 
	E_{k-1}(\omega),
\end{align}
where $e_k(t) = y_d(t)-y_k(t)$ is the error at iteration $k$ and time $t$, which is 
equivalent to, in the time 
domain, 
\begin{align}
	u_k(t_j) & = u_{k-1}(t_j) + (\rho \star g_{t_j}^{-1}\star 
	e_{k-1})(t_j),\label{eq:ltv-update}
\end{align}
 Note that updating the input at time instant $t_j$ using the convolution requires the error $e_k(t)$ at time instants $t$ before and after the time-instant $t_j$. 

\subsubsection{Model for input update}
In the Gaussian process 
framework, learning the parameterized model means the system is taken to have 
the form
\begin{align}
	\Gie(\mathbf{x}) = f(\mathbf{x}) + \epsilon,
\end{align}
where $\mathbf{x} = [\omega, p_1, p_2,\dots,p_m]$ and the kernel function 
$\krn(\mathbf{x}_1, \mathbf{x}_2)$ depends in some sense on the distance between 
points in $\mathbb{R}^{m+1}$. The training data in this case consists of sampled 
local transfer function estimates $\Gie_T$, which are computed by windowing the 
time-domain data and applying the Fourier transform to each segment, the 
corresponding frequencies $\Omega_T$, and representative parameter values for 
each segment.  Then the locally-linear inverse transfer function $\Gie(\omega)$ 
can be computed for each set of parameters $p_i(t)$ and the iterative update in 
\cref{eq:ltv-update} can be applied. 
The input correction can be computed for each model in frequency domain and 
sampled to retrieve the time index of interest, resulting in the update law
\begin{align}
	u_k(t_j) = u_{k-1}(t_j) + 
	\fft^{-1}\left.\left(\rho\Gie_j\fft(y_d(t)-y_{k-1}(t))\right)\right|_{t = 
		t_j}\label{eq:ltv-update-f}
\end{align}
where $\Gie_j \equiv \Gie(\omega,p_1(t_j),p_2(t_j),\dots,p_m(t_j)) \in 
\mathbb{C}^n$ is the estimated model based on the parameters at time $t_j$ 
evaluated at all frequencies.





\subsection{Extension to multivariate systems}
\label{sec:mimo}
When the system has more than one input, i.e.
\begin{align}
	Y_i(\omega) = \sum_{j=1}^p G_{ij}(\omega) U_j(\omega),
\end{align}
the individual transfer functions $G_i$ cannot be directly estimated from the 
data as in \cref{eq:model-data}. Experimentally it may be viable to sample each 
transfer function by holding all other inputs zero; however, it is preferable to 
develop a learning method that is valid for the general case.  

\subsubsection{Weighted kernel for single input case} Consider the case 
when the number of inputs $p = 1$ and samples of the input $U_1 
\in \mathbb{C}^n$ and corresponding output $Y_1 \in\mathbb{C}^n$ are available 
at frequencies $\Omega\in\mathbb{R}^n$.  Then, similar to \cref{eq:model-data}, 
the transfer function can be estimated as a Gaussian process
\begin{align}
	\Ge_{11}(\omega) = f_{11}(\omega) + \epsilon
\end{align}
with a kernel function $\krn_{11}(\omega_r,\omega_s)$ 
and noisy 
	transfer function samples $\Ge_{11r} = Y_{1r}/U_{1r}$ at frequency 
	$\omega_r$.  Equivalently, multiplying by the  known input 
	$U_1(\omega)$, the transfer function $G_{11}$ can be estimated 
	from 
\begin{align}
	Y_1(\omega) = f'_1(\omega, U_1) + \epsilon  = f_{11}(\omega) U_{1}(\omega) 
	+ \epsilon \label{eq:gp-weighted}
\end{align}
where the factor of $U_1(\omega)$ on the noise term $\epsilon$ has been dropped 
since the noise can be considered as output noise rather than process noise.  
The covariance of the input-weighted kernel $f_1'$ is given by 
	\begin{align}
		\notag\krn_{11}'(\omega_r,\omega_s|U_{1r},U_{1s}) &= 
		\cov(f'_1(\omega_r,U_{1r}),f_1'(\omega_s,U_{1s}))\\
		\notag &= 
		\mathbb{E}[f'_1(\omega_r,U_{1r})\overline{f}_1'(\omega_s,U_{1s})]\\
		\notag &= \mathbb{E}[f_{11}(\omega_r)U_{1r} 
		(\overline{f}_{11}(\omega_s)\overline{U}_{1s})]\\
		\notag &= 
		U_{1r}\mathbb{E}[f_{11}(\omega_r)\overline{f}_{11}(\omega_s)] 
		\overline{U}_{1s}\\
		{}& = U_{1r} \krn_{11}(\omega_r,\omega_s) 
		\overline{U}_{1s}\label{eq:weighted-kernel}.
\end{align}
where the overbar denotes complex conjugation and with an equivalent result in 
the case that $f_{11}$ is not a zero-mean process. The matched subscripts on 
$\omega_r$ and $U_{1r}$ indicate that $U_{1r}$ is an entry of the overall data 
vector $U_1$ with corresponding frequency $\omega_r$ from the data vector 
$\Omega$.

\begin{remark}[Construction of covariance matrix]
	The input-weighted kernel $\krn_{11}'$ in \cref{eq:weighted-kernel} gives 
	rise to a covariance matrix of the form
	\begin{align}
		K'_{11}(\Omega,\Omega|U_1,U_1) = \diag(U_1) K_{11}(\Omega,\Omega) 
		\diag(\overline{U}_1),
	\end{align}
	where $K_{11}$ is the covariance matrix for the unweighted kernel 
	$\krn_{11}$. The input-weighted kernel $\krn'_{11}$ is a valid kernel if 
	and only if the covariance matrix $K'_{11}$ is positive semi-definite for 
	any $\Omega$ and $U$, or, with  the conjugate transpose denoted  by 
	superscript $H$,
	\begin{align}
		\notag 0 &\leq v^H K'_{11}(\Omega,\Omega|U_1,U_1) v\\
		\notag&= v^H \diag(U_1) K_{11}(\Omega,\Omega) \diag(U_1)^H v\\
		\notag&= (\diag(U_1)^H v)^H K_{11}(\Omega,\Omega) (\diag(U_1)^H v)\\
		\notag&= w^H K_{11}(\Omega,\Omega) w
	\end{align}
which is satisfied for any valid unweighted kernel $\krn_{11}$.
\end{remark}
\begin{remark}[Equivalence of weighted kernel]
	Given measured frequency-domain input and output data $U_1$ and $Y_1$ at 
	corresponding frequencies $\Omega$, the predictive mean at test frequencies 
	$\Omega_*$ from the unweighted kernel is given by \cref{eq:Gi-mean} as 
	\begin{align}
		\mathbb{E}[f_{11}(\Omega_*)] = K_*(K+C_\epsilon)^{-1} \tilde{G},
	\end{align}
	where $K = K_{11}(\Omega,\Omega)$, $K_* = K_{11}(\Omega_*,\Omega)$, 
	$C_\epsilon$ is the noise covariance matrix, and the estimated transfer 
	function samples $\tilde{G}_{11}$ are computed by entry-wise division of 
	$Y$ by $U$, i.e. $Y_1 = \diag(U_1) \tilde{G}_{11}$. Similarly, when using 
	the input-weighted kernel $\krn'_{11}$ with test inputs $U_{1*}$ at 
	corresponding frequencies $\Omega_*$, with $D \equiv \diag(U_1)$ and $D_* 
	\equiv \diag(U_{1*})$, and treating the noise as process noise with 
	$C'_\epsilon = DC_\epsilon\overline{D}$, the predictive mean is given by
	\begin{align}
		\notag\mathbb{E}[f'_{11}(\Omega_*,U_*)]&= K'_* (K'+C'_\epsilon)^{-1} 
		Y_1\\
		\notag&=
		D_*K_*\overline{D}
		(D(K+C_\epsilon)\overline{D})^{-1} D \tilde{G}_{11}\\
		\notag&=D_*K_*(K+C_\epsilon)^{-1}\tilde{G}_{11}\\
		{}& = D_* \mathbb{E}[f_{11}(\Omega_*)]
	\end{align}
	so by selecting $D_* = I$, an estimate of the transfer function $G$ is 
	obtained that is equivalent to the estimate from the unweighted kernel 
	$\krn_{11}$.
\end{remark}





\subsubsection{Generalization to multiple input case}
In the case where the number of inputs $p$ is greater than 1, each transfer 
function $G_{ij}$ can be modeled as an independent Gaussian process with kernel 
$\krn_{ij}$, resulting in a Gaussian process model for $Y_i$ as
\begin{align}
	Y_i(\omega) &= f_i'(\omega,U_1,\dots,U_p)= \sum_{j=1}^p f_{ij}(\omega) 
	U_j(\omega) + \epsilon
\end{align}
and the weighted kernel function $\krn'_i$  can be derived using 
	the identity that for independent random vectors $X$, $Y$,
\begin{align}
	\cov(X+Y,X+Y) = \cov(X,X)+\cov(Y,Y),
\end{align}
resulting in
\begin{align}
	\krn'_i(\omega_r,\omega_s|\{U_{jr},U_{js}\}_{j=1}^p) = 
	\sum_{j=1}^p 
	U_{jr}\krn_{ij}(\omega_r,\omega_s)\overline{U}_{js}\label{eq:mimo-gpr}
\end{align}
and the corresponding covariance matrix \begin{align}
	K'_i = \sum_{j=1}^p\diag(U_j) K_{ij} \diag(\overline{U}_j).
\end{align}
Then an estimate of an individual transfer function $\Ge_{ij}$ can be 
obtained by generating a Gaussian process prediction with $U_{j*} = 
\mathbf{1}$ and all other inputs identically zero.
In the multiple-output case, \cref{eq:mimo-gpr} represents the kernel for 
estimating one line of the transfer matrix, i.e. each output corresponds to a 
multiple-input Gaussian process model with a kernel as in \cref{eq:mimo-gpr}.

\subsubsection{Convergence conditions} 
In the case of square MIMO systems, conditions for  local convergence are given by 
the following lemmas.
\begin{lemma}
	\label{lem:convergence}
	 For a fixed transfer matrix $G(\omega) \in \mathbb{C}^{N\times N}$, where $N$ 
	is a positive integer,  the sequence in \cref{eq:iterative-update-f} 
	converges to $U^*(\omega) \in \mathbb{C}^N$ at frequency 
	$\omega$ if and only if all eigenvalues of 
	$(I-\rho(\omega)\Gie(\omega)G(\omega))$ have magnitude less than one.
\end{lemma}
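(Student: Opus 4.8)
The plan is to reduce the convergence question to a standard statement about the powers of a single matrix by tracking how the input error evolves from one iteration to the next. First I would fix the frequency $\omega$ and suppress it from the notation, writing the plant relation in vector form as $Y_k = G U_k$ and letting $U^*$ denote the target input satisfying $Y_d = G U^*$. Substituting $Y_{k-1} = G U_{k-1}$ and $Y_d = G U^*$ into the update law \cref{eq:iterative-update-f} gives
\begin{align}
	U_k = U_{k-1} + \rho \Gie (G U^* - G U_{k-1}).\notag
\end{align}

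Second, I would introduce the input error $\delta_k \equiv U_k - U^*$ and subtract $U^*$ from both sides, which produces the homogeneous linear recursion
\begin{align}
	\delta_k = (I - \rho \Gie G)\,\delta_{k-1},\notag
\end{align}
so that $\delta_k = (I - \rho \Gie G)^k \delta_0$ by a trivial induction. Since $\delta_0 = U_0 - U^*$ ranges over all of $\mathbb{C}^N$ as the initial guess $U_0$ varies, convergence $U_k \to U^*$ for every initial guess is exactly equivalent to the matrix condition $(I - \rho \Gie G)^k \to 0$ as $k \to \infty$.

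The final step is to invoke the standard fact from matrix analysis that the powers $M^k$ of a square matrix $M$ tend to the zero matrix if and only if every eigenvalue of $M$ has magnitude strictly less than one, and to apply it with $M = I - \rho \Gie G$; this yields the stated criterion. I expect the only nontrivial part to be the forward implication of this matrix fact, namely that having all eigenvalues inside the open unit disk forces $M^k \to 0$. This is most cleanly handled by passing to the Jordan canonical form of $M$ and bounding the entries of each Jordan block power $J^k$ by terms of the form $\binom{k}{i}|\lambda|^{k-i}$, each of which decays to zero whenever $|\lambda| < 1$; alternatively, one may use the existence of a submultiplicative norm in which $\|M\| < 1$ under this eigenvalue hypothesis, so that $\|M^k\| \le \|M\|^k \to 0$. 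The reverse implication is immediate: if some eigenvalue $\lambda$ has $|\lambda| \ge 1$ with eigenvector $v$, then choosing $\delta_0 = v$ gives $\delta_k = \lambda^k v$, which does not converge to zero, so the iteration fails to converge for that initial guess.
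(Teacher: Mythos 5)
Your proof is correct, and it takes a genuinely different (and in one respect cleaner) route than the paper's. The paper never introduces the error relative to the target: it subtracts the update law at consecutive iterations, cancelling $Y_d$, to obtain the increment recursion $U_{k+1}-U_k = (I-\rho\Gie G)(U_k-U_{k-1}) = (I-\rho\Gie G)^k(U_1-U_0)$, and then asserts that the sequence converges if and only if $(I-\rho\Gie G)^k \to 0$, i.e.\ iff all eigenvalues lie in the open unit disk. You instead use $Y_d = GU^*$ to derive the fixed-point error recursion $\delta_k = (I-\rho\Gie G)^k\delta_0$ with $\delta_k = U_k - U^*$. Both arguments reduce to the same matrix fact---$M^k \to 0$ iff the spectral radius of $M$ is less than one---which you, unlike the paper, actually sketch a proof of (Jordan form, or an operator norm with $\|M\|<1$). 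Your decomposition buys two things. First, it directly identifies the limit as $U^*$: the paper's recursion only shows the increments decay geometrically, so the sequence converges, but pinning the limit to $U^*$ requires an extra step (pass to the limit in the update law and use that $\rho\Gie G$ is invertible under the eigenvalue condition), which the paper leaves implicit. Second, your treatment of the ``only if'' direction is more careful: decay of increments generated by one particular $U_1-U_0$ does not by itself force $M^k\to 0$, so the quantification over initial conditions matters, and you make it explicit by letting $\delta_0$ range over $\mathbb{C}^N$ and exhibiting the eigenvector initial error $\delta_0 = v$ with $|\lambda|\ge 1$ as a counterexample. What the paper's route buys in exchange is that it never needs to posit the existence of $U^*$ with $GU^* = Y_d$ to run the recursion; your proof assumes this upfront, but since the lemma's statement already presupposes a target $U^*$, that assumption is not a gap.
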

\begin{proof}
	From \cref{eq:iterative-update-f},
	\begin{multline}
		U_{k+1}-U_k = U_k +\rho\Gie(Y_d-Y_k)\\ - U_{k-1} - 
		\rho\Gie(Y_d-Y_{k-1}),
	\end{multline}
	where the argument $\omega$ has been dropped for clarity. Then, cancelling 
	the repeated factors of $Y_d$ and using the definition of the transfer 
	matrix $G$,
	\begin{align}
		U_{k+1} - U_k &= U_k - U_{k-1} -\rho\Gie G(U_k - U_{k-1})\\
		{}&= (I-\rho\Gie G)(U_k-U_{k-1})\\
		{}&= (I-\rho\Gie G)^k(U_1-U_0).
	\end{align}
	The sequence $U_k$ converges as $k \rightarrow \infty$  if and only if
	\begin{align}
		\lim_{k\rightarrow\infty} (I-\rho\Gie G)^k = 0,
	\end{align}
	or, equivalently, if all eigenvalues of $(I-\rho\Gie G)$ have magnitude 
	less than one.
\end{proof}

\begin{lemma}[Convergence for square MIMO systems]
	\label{lem:rho-bounds}
	Let $G(\omega) \in \mathbb{C}^{N\times N}$ be the fixed 
	transfer matrix of a system with $N$ inputs and $N$ outputs evaluated at 
	frequency $\omega$ and let $\Gie(\omega)$ be an estimate of its inverse. 
	Let the modeling error be described as $\Delta(\omega) \equiv \Gie G$, with
	\begin{align}
		[\Delta]_{ij} = \Delta_{mij}e^{j\Delta_{pij}}
	\end{align}
	where the diagonal terms are nonzero, i.e., $\Delta_{mii} \ne 0 $ for all $i$. 
	Then the sequence
	\begin{align}
		U_{k}(\omega) = U_{k-1}(\omega) + \rho(\omega) 
		\Gie(\omega)(Y_d(\omega) - Y_{k-1}(\omega))
	\end{align}
	converges to the desired input $U^*(\omega)$ at frequency $\omega$ if the 
	iteration gain $\rho(\omega) = \diag(\rho_1(\omega),\dots,\rho_N(\omega))$ 
	satisfies the conditions
	\begin{align}
		0 < \rho_i(\omega) &< 2\frac{\Delta_{mii}\cos\Delta_{pii} - 
		\sum_{j\neq i} \Delta_{mij}}{\Delta_{mii}^2 - \left(\sum_{j\neq 
	i}\Delta_{mij}\right)^2}\label{eq:mimo-rho-bound}\\
	\Delta_{mii}\cos\Delta_{pii} &> \sum_{j\neq 
	i}\Delta_{mij}\label{eq:mimo-phase-bound}
	\end{align}
	where the subscript $j$ is an index and not $\sqrt{-1}$. 
\end{lemma}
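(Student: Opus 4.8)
The plan is to use \cref{lem:convergence} to convert the convergence question into a statement about the eigenvalues of $M \equiv I - \rho\Gie G = I - \rho\Delta$, and then to localize those eigenvalues with Gershgorin's circle theorem. By \cref{lem:convergence}, the iteration converges at frequency $\omega$ if and only if every eigenvalue of $M$ has magnitude less than one. Since $\rho = \diag(\rho_1,\dots,\rho_N)$ and $[\Delta]_{ij} = \Delta_{mij}e^{j\Delta_{pij}}$, the entries of $M$ are $M_{ii} = 1 - \rho_i\Delta_{mii}e^{j\Delta_{pii}}$ on the diagonal and $M_{ij} = -\rho_i\Delta_{mij}e^{j\Delta_{pij}}$ off the diagonal.

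First I would invoke Gershgorin's theorem, which guarantees that every eigenvalue of $M$ lies in at least one disc centered at $M_{ii}$ with radius $R_i = \sum_{j\neq i}|M_{ij}| = \rho_i\sum_{j\neq i}\Delta_{mij}$, using $\rho_i>0$ to pull the gain out of the absolute value. A sufficient condition for all eigenvalues to fall inside the open unit disk is then that each Gershgorin disc is contained in it, namely $|M_{ii}| + R_i < 1$ for every $i$. Writing $S_i \equiv \sum_{j\neq i}\Delta_{mij}$, I would next evaluate the diagonal magnitude: expanding $|M_{ii}|^2$ and collapsing the $\cos^2 + \sin^2$ terms gives $|M_{ii}|^2 = 1 - 2\rho_i\Delta_{mii}\cos\Delta_{pii} + \rho_i^2\Delta_{mii}^2$.

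Substituting this into the disc-containment inequality $|M_{ii}| < 1 - R_i$, squaring both sides, cancelling the common unit terms, and dividing by the positive factor $\rho_i$ reduces the condition to $\rho_i(\Delta_{mii}^2 - S_i^2) < 2(\Delta_{mii}\cos\Delta_{pii} - S_i)$. When the coefficient $\Delta_{mii}^2 - S_i^2$ is positive this rearranges to exactly the upper bound in \cref{eq:mimo-rho-bound}, with the lower bound $\rho_i>0$ ensuring the interval is nontrivial. The remaining task is to certify that the denominator and numerator on the right-hand side are both positive, and this is precisely where \cref{eq:mimo-phase-bound} enters: from $\Delta_{mii}\cos\Delta_{pii} > S_i$, together with $\Delta_{mii}>0$ and $\cos\Delta_{pii}\le 1$, one obtains $\Delta_{mii} > S_i \ge 0$, hence $\Delta_{mii}^2 > S_i^2$ and a strictly positive numerator, so an admissible range of $\rho_i$ exists.

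I expect the main obstacle to be bookkeeping around the squaring step rather than any deep difficulty. Specifically, squaring $|M_{ii}| < 1 - R_i$ preserves the inequality only when the right-hand side $1 - R_i = 1 - \rho_i S_i$ is positive, so I would verify that the derived bound on $\rho_i$ forces $\rho_i S_i < 1$, keeping the argument internally consistent. It is also worth noting for interpretation that \cref{eq:mimo-phase-bound} is the genuine MIMO generalization of the scalar phase criterion $|\Delta_p| < \pi/2$: when the off-diagonal coupling vanishes, $S_i = 0$ and the condition collapses to $\cos\Delta_{pii} > 0$, while \cref{eq:mimo-rho-bound} reduces to the scalar gain bound $0 < \rho_i < 2\cos\Delta_{pii}/\Delta_{mii}$.
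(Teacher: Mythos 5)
Your proposal is correct and takes essentially the same route as the paper's own proof: \cref{lem:convergence} plus the Gershgorin disc theorem, the same squaring of $|1-\rho_i\Delta_{ii}| < 1 - R_i$ and rearrangement to \cref{eq:proof_before_division}, and the same use of \cref{eq:mimo-phase-bound} to guarantee a positive numerator and denominator. Your flagged consistency check that the bound forces $\rho_i S_i < 1$ (so the squaring step is reversible, which the paper leaves implicit) does go through, since $2S_i(\Delta_{mii}\cos\Delta_{pii} - S_i) \le \Delta_{mii}^2 - S_i^2$ follows from $\cos\Delta_{pii}\le 1$ and $(\Delta_{mii}-S_i)^2 \ge 0$.
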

\begin{proof}
	By \cref{lem:convergence}, proving convergence amounts to proving the 
	magnitudes of the eigenvalues of $(I-\rho\Gie G)$ are less than 1.  By the 
	Gershgorin disc theorem, the eigenvalues lie in the union of the discs with 
	centers $C_i = 1-\rho_i\Delta_{ii}$ and radii $R_i = \sum_{j\neq 
	i}|\rho_{i}\Delta_{ij}| = \rho_i\sum_{j\neq i}\Delta_{mij}$ for $i \leq N$.  
	The maximum distance from the origin to a point in disc $i$ is then given 
	by $|C_i|+R_i$, so disc $i$ is bounded by the unit circle if and only if
	\begin{align}
		1 >{}& |1-\rho_i\Delta_{ii}| + R_i \label{eq:rho-bound-start} \\
		\notag 1-R_i >{}& 
		|1-\rho_i\Delta_{mii}(\cos\Delta_{pii}+j\sin\Delta_{pii})|. 
	\end{align}
	If the above statement is true, then $(1-R_i) \ge 0 $ and therefore, the square must also hold, i.e., 		
	\begin{align}	
		\notag (1-R_i)^2 >{}& (1-\rho_i\Delta_{mii}\cos\Delta_{pii})^2
		+\rho_i^2\Delta_{mii}^2\sin^2\Delta_{pii}\\
		\notag 1-2R_i+R_i^2 >{}& 1 + \rho_i^2\Delta_{mii}^2 - 
		2\rho_i\Delta_{mii}\cos\Delta_{pii}.
	\end{align}
Finally, after cancelling  the like terms and dividing by the positive gain 
$\rho_i$, this expression can be rearranged as 
	\begin{align}	
	         \rho_i   [ \Delta_{mii}^2 -  (\sum_{j\neq i}\Delta_{mij} )^2 ]  < {} &   	
	      	   2( \Delta_{mii} \cos\Delta_{pii}  - \sum_{j\neq i}\Delta_{mij}  ) . 
		    \label{eq:proof_before_division}
	\end{align}
	Since  $\Delta_{mii} \ne 0 $, the 	
	condition in \cref{eq:mimo-phase-bound}  implies that 
	\begin{align} 
	\notag  \cos\Delta_{pii} > 0 . 
	\end{align}
	Moreover, since $\cos\Delta_{pii}$ is less than one, 	 from the condition in \cref{eq:mimo-phase-bound}, 
	\begin{align} 
	 \Delta_{mii} >   \sum_{j\neq i}\Delta_{mij} \ge  0 , 
	 \label{eq:mimo-phase-bound_no_phase}
	\end{align}
	and squaring the uncertainties on both sides yields 
	\begin{align}
	\Delta_{mii}^2  - (\sum_{j\neq i}\Delta_{mij})^2  > 0 , 
	\end{align}
	which leads to  \cref{eq:mimo-rho-bound} from \cref{eq:proof_before_division}.  
	The use of the phase condition in \cref{eq:mimo-phase-bound},   rather than a condition as in \cref{eq:mimo-phase-bound_no_phase}, 
	ensures that the right hand side of \cref{eq:mimo-rho-bound} is nonzero, which allows the selection of a nonzero iteration gain $\rho_i$. 
	
%
%
%
\end{proof}
\begin{remark}[Specialization to scalar case]
    When the dimension of the transfer matrix $N = 1$, or when the error in the 
    off-diagonal terms is zero, \cref{eq:mimo-rho-bound,eq:mimo-phase-bound} are 
    equivalent to the expressions given for scalar systems in 
    \cref{eq:phase-criterion,eq:convergence-criterion}.
\end{remark}

\begin{lemma}[Convergence with bounded uncertainty]
	\label{lem:bounded-error-convergence}
	When the errors in the real and imaginary components of the estimated model 
	$\Ge$ are bounded as
	\begin{align}
		|\hat{a}_{ij} - a_{ij}| \leq \Delta_{aij}
		\label{def_uncertainty_a}
		\\
		|\hat{b}_{ij} - b_{ij}| \leq \Delta_{bij}
		\label{def_uncertainty_b}
	\end{align}
	with
	\begin{align}
		G_{ij} &= a_{ij} + jb_{ij},\label{eq:uncertainty-real}\\
		\Ge_{ij} &= \hat{a}_{ij} + j\hat{b}_{ij}\label{eq:uncertainty-imag},
	\end{align}
	the condition in \cref{eq:mimo-rho-bound} is satisfied if
	\begin{align}
		0 < \rho_i < 
		2\frac{1-\real\hat{r}_{abs,i}\Delta_{ai}-\imag\hat{r}_{abs,i}\Delta_{bi}- 
			|\hat{r}_i|D_i}{|\hat{r}_i|^2|\hat{c}_{abs,i}+ \Delta_{ai} + 
			j\Delta_{bi}|^2\label{eq:mimo-gpr-rho}
	}
	\end{align}
	where $\hat{r}_i$ is the $i$th row of the inverse transfer matrix $\Gie$, 
	$\hat{c}_i$ is the $i$th  column of the forward transfer matrix $\Ge$, 
	$\Delta_{ai}$ and $\Delta_{bi}$ denote the $i$th column of the real and 
	imaginary uncertainty matrices $\Delta_a$ and $\Delta_b$ defined in \cref{def_uncertainty_a}, \cref{def_uncertainty_b}, 
	\begin{align}
		[\hat{c}_{abs,i}]_j &= |\real\hat{c}_{ij}| + j|\imag\hat{c}_{ij}|,\\
		D_i &= \sum_{j\neq i} |\Delta_{aj}+j\Delta{bj}|, 
	\end{align}
	\begin{align}
		[\real\hat{r}_{abs,i} ]_j  & = |\real\hat{r}_{ij}|  \\
	        [\imag\hat{r}_{abs,i} ]_j & = |\imag\hat{r}_{ij}| . 
   	\end{align}
\end{lemma}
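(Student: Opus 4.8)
The plan is to show that the uncertainty-based bound on $\rho_i$ in \cref{eq:mimo-gpr-rho} is never larger than the exact convergence bound in \cref{eq:mimo-rho-bound}, so that any $\rho_i$ satisfying the former automatically satisfies the hypotheses of \cref{lem:rho-bounds}. Writing the modeling error as $\Delta = \Gie G$ and using that the inverse estimate satisfies $\Gie\,\Ge = I$, I would first re-express $\Delta$ through the forward estimate $\Ge$ and the entrywise error $\delta \equiv G - \Ge$, whose components obey $|\real\delta_{kj}| \le \Delta_{akj}$ and $|\imag\delta_{kj}| \le \Delta_{bkj}$ by \cref{def_uncertainty_a,def_uncertainty_b}. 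This gives the diagonal entry $\Delta_{ii} = 1 + \sum_k \hat{r}_{ik}\delta_{ki} = \sum_k \hat{r}_{ik}G_{ki}$ and the off-diagonal entries $\Delta_{ij} = \sum_k \hat{r}_{ik}\delta_{kj}$ for $j \ne i$, where $\hat{r}_{ik}$ is the $k$th entry of the $i$th row $\hat{r}_i$ of $\Gie$ and $G_{ki}$, $\Ge_{ki}$ are the $k$th entries of the $i$th columns of $G$ and $\Ge$.

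Next, I would bound the numerator of \cref{eq:mimo-rho-bound}, using $\Delta_{mii}\cos\Delta_{pii} = \real\Delta_{ii}$ and $\Delta_{mij} = |\Delta_{ij}|$. Expanding $\real(\hat{r}_{ik}\delta_{ki}) = \real\hat{r}_{ik}\,\real\delta_{ki} - \imag\hat{r}_{ik}\,\imag\delta_{ki}$ and applying the real and imaginary error bounds termwise gives $\real\Delta_{ii} \ge 1 - \real\hat{r}_{abs,i}\Delta_{ai} - \imag\hat{r}_{abs,i}\Delta_{bi}$, which recovers the first three terms of the numerator of \cref{eq:mimo-gpr-rho}. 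For the off-diagonal sum, the triangle inequality together with $|\delta_{kj}| \le |\Delta_{akj} + j\Delta_{bkj}|$ and an exchange of the order of summation gives $\sum_{j\ne i}|\Delta_{ij}| \le |\hat{r}_i|D_i$. Combining the two shows the numerator of \cref{eq:mimo-rho-bound} is at least that of \cref{eq:mimo-gpr-rho}; in particular, positivity of the latter (forced by the assumed range of $\rho_i$) yields $\real\Delta_{ii} > \sum_{j\ne i}|\Delta_{ij}|$, which is exactly the phase condition \cref{eq:mimo-phase-bound}.

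For the denominator I would prove $|\Delta_{ii}|^2 \le |\hat{r}_i|^2\,|\hat{c}_{abs,i} + \Delta_{ai} + j\Delta_{bi}|^2$, reading $|\cdot|^2$ as the squared Euclidean norm. Since $|\real G_{ki}| \le |\real\hat{c}_{ik}| + \Delta_{aki}$ and $|\imag G_{ki}| \le |\imag\hat{c}_{ik}| + \Delta_{bki}$, each $|G_{ki}|$ is bounded by the magnitude of the $k$th entry of $\hat{c}_{abs,i} + \Delta_{ai} + j\Delta_{bi}$; the triangle inequality applied to $\Delta_{ii} = \sum_k \hat{r}_{ik}G_{ki}$ followed by the Cauchy--Schwarz inequality then produces the stated bound. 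Because $\bigl(\sum_{j\ne i}|\Delta_{ij}|\bigr)^2 \ge 0$, the denominator of \cref{eq:mimo-rho-bound} is consequently no larger than that of \cref{eq:mimo-gpr-rho}.

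Finally, the phase condition makes the denominator of \cref{eq:mimo-rho-bound} positive, since it factors as $(|\Delta_{ii}| - \sum_{j\ne i}|\Delta_{ij}|)(|\Delta_{ii}| + \sum_{j\ne i}|\Delta_{ij}|)$ with $|\Delta_{ii}| \ge \real\Delta_{ii} > \sum_{j\ne i}|\Delta_{ij}|$. With a larger numerator and a smaller positive denominator, the fraction in \cref{eq:mimo-rho-bound} dominates the one in \cref{eq:mimo-gpr-rho}, so every $\rho_i$ in the range of \cref{eq:mimo-gpr-rho} lies in the range of \cref{eq:mimo-rho-bound} and convergence follows from \cref{lem:rho-bounds}. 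I expect the main obstacle to be notational bookkeeping rather than any conceptual difficulty: tracking the row/column index conventions for $\hat{r}_i$ and $\hat{c}_i$, and keeping the two readings of $|\cdot|$ consistent (entrywise magnitude in the numerator product $|\hat{r}_i|D_i$ versus Euclidean norm in the Cauchy--Schwarz step for the denominator), so that the real/imaginary error splits and the norm bound assemble exactly into the quantities appearing in \cref{eq:mimo-gpr-rho}.
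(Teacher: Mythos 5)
Your proposal is correct and follows essentially the same route as the paper's proof: the same decomposition $\Delta_{ij} = \hat{r}_i c_j = \delta_{ij} + \hat{r}_i e_j$ via $\Gie\Ge = I$, the same termwise real/imaginary lower bound on $\real\Delta_{ii}$, and the same Cauchy--Schwarz bounds giving $\sum_{j\ne i}\Delta_{mij} \le |\hat{r}_i|D_i$ and $\Delta_{mii}^2 \le |\hat{r}_i|^2|\hat{c}_{abs,i}+\Delta_{ai}+j\Delta_{bi}|^2$, combined so that the bound in \cref{eq:mimo-gpr-rho} lower-bounds that of \cref{eq:mimo-rho-bound}. Your treatment is in fact slightly more complete than the paper's, since you explicitly verify positivity of both denominators and note that positivity of the numerator in \cref{eq:mimo-gpr-rho} recovers the phase condition \cref{eq:mimo-phase-bound}, points the paper's proof leaves implicit.
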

\begin{proof}
	The lemma amounts to finding a lower bound on the right-hand-side of 
	\cref{eq:mimo-rho-bound} under the uncertainty $\Delta_a+j\Delta_b$. It is 
	convenient to first develop bounds on the terms in the fraction.  Note that 
	$\Delta_{ij} = \hat{r}_i c_j$, where $c_j$ is the $j$th column of the 
	forward transfer matrix $G$.  Then the magnitude $\Delta_{mij}$ is
	\begin{align}
		\Delta_{mij}&=|\hat{r}_i c_j| = |\hat{r}_i (\hat{c}_j + e_j)|
		= |\delta_{ij} + \hat{r}_i e_j|
	\end{align}
	where the real and imaginary parts of the error term $e_j$ is bounded by 
	\cref{eq:uncertainty-real,eq:uncertainty-imag}. Then, using the 
	Cauchy-Schwarz inequality, the magnitude can be bounded as
	\begin{align}
		\Delta_{mij} \leq
		\begin{cases}
			|\hat{r}_i| |\Delta_{aj}+j\Delta_{bj}|, & i \neq j\\
			|\hat{r}_i| |\hat{c}_{abs,i}+\Delta_{ai}+j\Delta_{bi}|, & i = j
		\end{cases}
	\end{align}
	where in the case of $i = j$ the second term is the maximum magnitude of 
	$|\hat{c}_i|$ due to the bounds in 
	\cref{eq:uncertainty-real,eq:uncertainty-imag}.

	Similarly,
	\begin{align}
		\Delta_{mii}\cos\Delta_{pii} &= \real(\hat{r}_i c_i) = 
		\real(\delta_{ii} + \hat{r}_ie_i) \\&= 1+ \real(\hat{r}_i)\real(e_i) - 
		\imag(\hat{r}_i)\imag(e_i), 
	\end{align}
	which is minimized when every entry of  $e_i$ has the same (maximum) magnitude as  $\Delta_{ai} + j \Delta_{bi}$ and every term in the 
	product is negative, resulting in
	\begin{align}
		\Delta_{mii}\cos\Delta_{pii} \geq 1 - \real\hat{r}_{abs,i}\Delta_{ai} 
		- \imag\hat{r}_{abs,i}\Delta_{bi}.
	\end{align}
	Combining these bounds on the terms in \cref{eq:mimo-rho-bound} leads to 
	the expression in \cref{eq:mimo-gpr-rho}.
\end{proof}


\begin{remark}[Local convergence of time-varying model]
	In the preceding 
	\cref{lem:convergence,lem:rho-bounds,lem:bounded-error-convergence}, the iteration-law was designed assuming a 
	time-invariant system. For the time-varying case, as in \cref{sec:ltv}, the 
	iteration law designed using the fixed model is anticipated to converge 
	provided the time-variations are sufficiently slow. The convergence 
	conditions for such variations could be developed using Picard-type 
	arguments, e.g., as studied in   \cite{Altin17} for the nonlinear case.  
\end{remark}

\section{Experiments}
\subsection{Experimental system}
The experimental system, used to evaluate the proposed IML approach, was a 2-DOF 
robotic arm, illustrated in Figures \ref{fig:ArmDiagram} and \ref{fig:ArmPhoto}. 
The  arm was constructed out of HEBI X5-4 series elastic actuators, black 
1.25''$\oslash$ PVC Pipe, aluminum brackets, and a steel plate for mass, as 
shown in Figure \ref{fig:ArmPhoto}. Each  X5-4 actuator contains: a 
microcontroller, motor drive, Ethernet communication/daisy chain, encoder, 
series elastic element, and status LED. The X5-4 is capable of 
\SI{4}{\newton\meter} continuous torque (\SI{7}{\newton\meter} peak), 32 RPM 
max, $\pm$ 4 turn absolute encoder with 0.005$^{\circ}$ resolution, and contains 
a torsional spring in series between motor and load with a stiffness of about 
\SI{70}{\newton\meter\per\radian}.  Each actuator weighs approximately 
\SI{340}{\g}.

MATLAB was used to communicate over Ethernet with the X5's on-board microcontrollers. Data was logged locally on each actuator, and subsequently retrieved and processed via MATLAB interface.

\begin{figure}
\centering
\includegraphics[width=0.7\columnwidth]{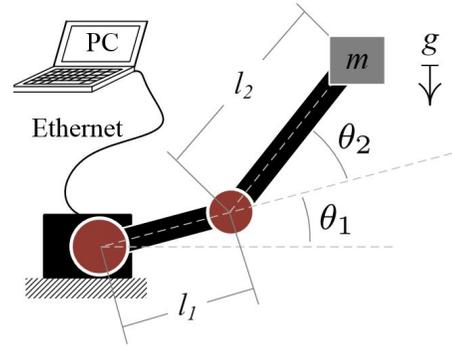}
\caption{System diagram. 2-DOF Robotic Arm, driven angles $\theta_1$ and 
$\theta_2$, with link lengths of $l_1$ = 6.25'' and $l_2$ = 10.5'', supporting a 
mass, $m$ = \SI{275}{\gram}, at the end of link 2. Actuator communication to PC 
over Ethernet.}
\label{fig:ArmDiagram}
\end{figure}

\begin{figure}
\centering
\includegraphics[width=0.7\columnwidth]{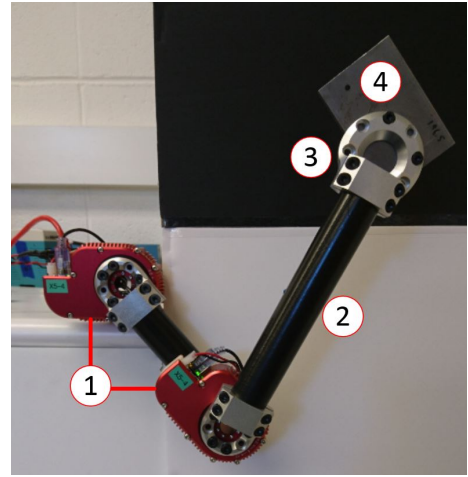}
\caption{Experimental Setup. Robot arm consisting of: (1) HEBI X5-4 Series 
Elastic Actuators, (2) black 1.25"$\oslash$ PVC piping, and (3) aluminum 
brackets, supporting a (4) small steel plate.}
\label{fig:ArmPhoto}
\end{figure}

\subsection{Application of iterative machine learning for joint control}
To verify the ability to track trajectories with varying parameters, two 
trajectories were tested: i) a ``slow'' large-range-of-motion trajectory that 
moves both joints through an angle $\pi$, and ii) a ``fast'' 
short-range-of-motion trajectory that moves both joints through an angle 
$\pi/2$, both with a sinusoidal acceleration profile. Application of the 
iterative machine learning algorithm requires addressing the angle-dependent 
dynamics as presented in \cref{sec:ltv} with the joint angles $\theta_1(t)$ and 
$\theta_2(t)$ as the parameters.  Additionally, there is coupling between the 
inputs. Each of the two inputs $u_1(t)$ and $u_2(t)$  have significant 
contribution to the response of both joint angles, so the multiple-input 
multiple-output case was considered as in \cref{sec:mimo}.  The kernel function 
for each transfer function was selected as the squared-exponential kernel with 
automatic relevance determination \cite{Rasmussen_06},
\begin{align}
	\krn_{ij}(\mathbf{x}_1,\mathbf{x}_2) = \sigma_f^2\exp \left(
		-\frac{1}{2}
		(\mathbf{x}_1-\mathbf{x}_2)^H
		\Lambda
		(\mathbf{x}_1-\mathbf{x}_2)
	\right)
    \label{Exp_section_kernel}
\end{align}
where $\Lambda = \diag(\ell_1^{-2},\, \ell_2^{-2},\,\ell_3^{-2})$ is a matrix of 
length scales for the input variables $\mathbf{x} \equiv \mat{\omega & \theta_1 
													    & 
\theta_2}^T$ and the superscript $H$ denotes the complex conjugate transpose.  

To improve the initial models, a quantized and slowed initial trajectory 
$u_0'(t)$ was used, which is shown in \cref{fig:u0-aug}. This method is similar 
to the augmented training signal used in \cite{Banka17IML}, and helps ensure 
that sufficient frequency data is available by sampling the frequency response 
at discrete states along the desired trajectory. 

When training the model, the parameters $\theta_1$ and $\theta_2$ were rounded 
to the nearest $\pi/10 \approx \SI{0.314}{\radian}$ and frequency-response data 
was taken from a \SI{2}{\second} window after each change in the discretized 
joint angles.
After each iteration $k$, the resulting data were thresholded to include only 
the frequency components at which both the input and output magnitudes were at 
least 50\% of the maximum magnitudes in the window, which selects the most 
relevant data and reduces the number of training points to save computational 
cost.  The additional data from iteration $k$ was combined with the data from 
iterations $1$ through $k-1$ to obtain the training set. Then hyperparameters 
for each output kernel $\krn'_i$ were obtained by maximizing the marginal 
likelihood for the training data as described in \cite{Rasmussen_06}.  After 
learning the hyperparameters, the model was estimated at each unique joint angle 
combination in $y_k(t)$, e.g.  as in \cref{fig:model-bode_1,fig:model-bode_2}, 
and the correction in \cref{eq:ltv-update-f} was applied at all time points 
matching those parameters, resulting in an updated input signal $u_{k+1}(t)$.  
When computing the update, the iteration gains were computed as $60\%$ of the 
maximum given in \cref{eq:mimo-rho-bound} with the uncertainty bounds selected 
as
\begin{align}
	\Delta_a = \Delta_b = 2\sqrt{\mathbb{V}[f(\omega)]}.
\end{align} 

\begin{figure}[tb] 
	\begin{center}
		\includegraphics[width=\columnwidth]{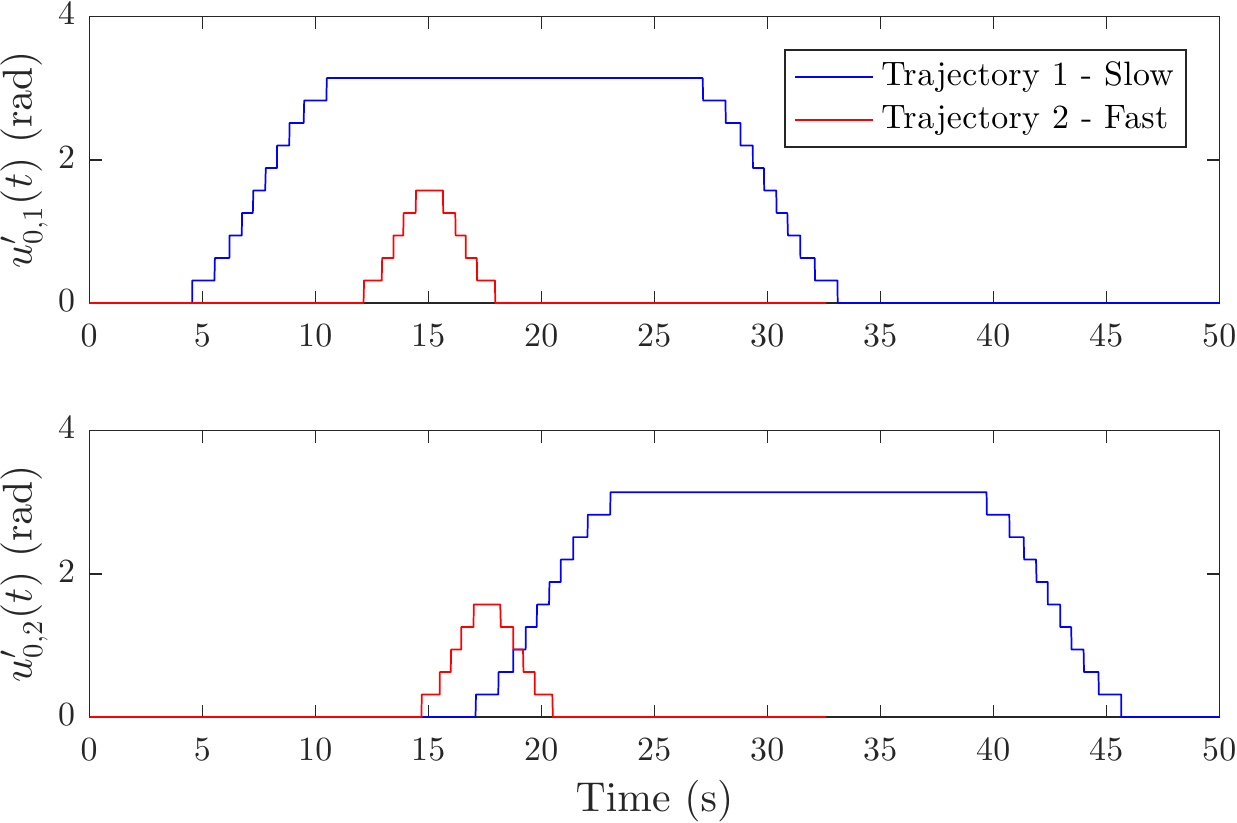}
	\end{center}
	\caption{Augmented initial trajectory $u_0'(t)$}
	\label{fig:u0-aug}
\end{figure}

\begin{figure}[tb] 
	\begin{center}
		\includegraphics[width=\columnwidth]{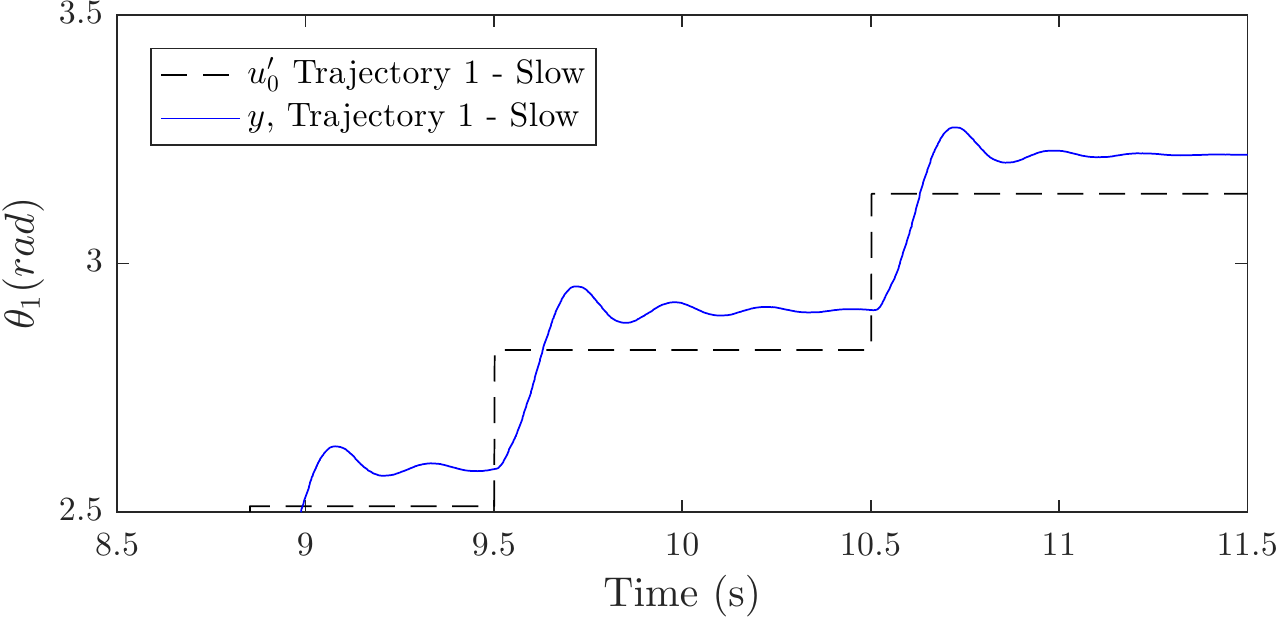}
	\end{center}
	\caption{Step response of $\theta_1$ from the augmented initial trajectory 1, $u_0'(t)$, shown in \cref{fig:u0-aug}.}
	\label{fig:stepResponse}
\end{figure}





\subsection{Tracking results}

\begin{figure}[tb] 
	\begin{center}
		\includegraphics[width=\columnwidth]{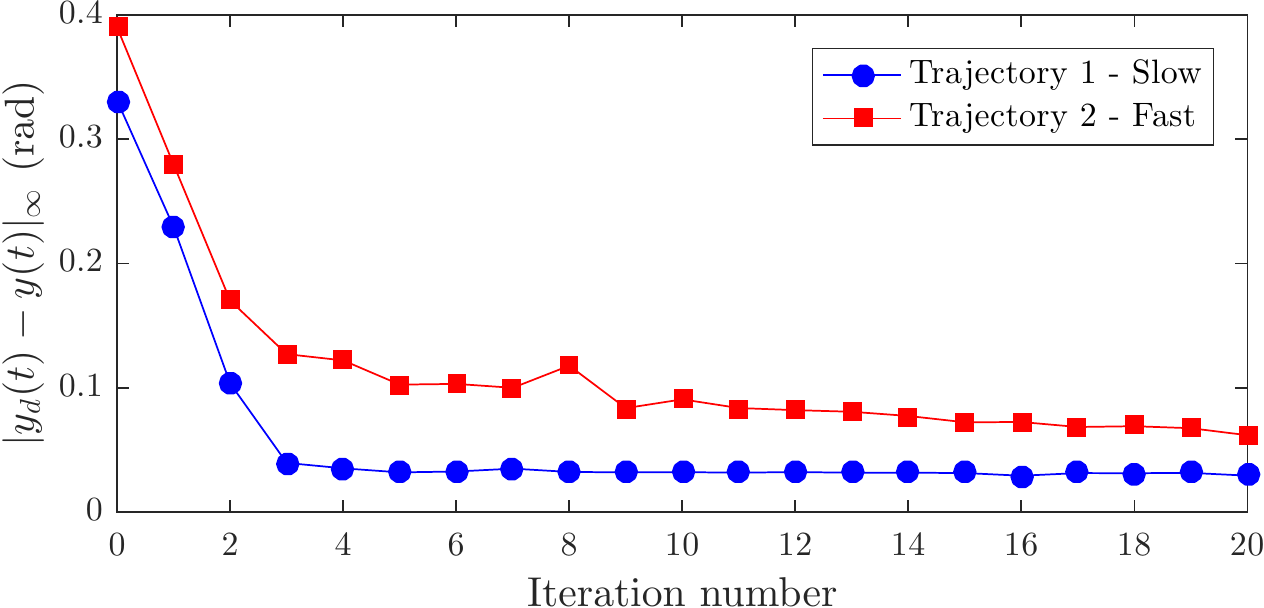}
	\end{center}
	\caption{Convergence of the iterative machine learning procedure for the 
	series-elastic robot arm.}
	\label{fig:convergence}
\end{figure} 

The iterative gains derived in \cref{eq:mimo-rho-bound} and the multiple-input 
iterative machine learning procedure were successfully used to control the joint 
angles of the series elastic robot, as shown by the convergence behavior in 
\cref{fig:convergence}, with a final full range of motion error of \SI{0.029}{\radian} and \SI{0.062}{\radian}, or 
\SI{1}{\percent} and \SI{4}{\percent}, for trajectories 1 and 2, respectively. This is a 91\% and a 84\% reduction in error for trajectories 1 and 2, respectively, compared to using the DC gain $G_0$ alone.  Convergence was achieved after five 
iterations for the slow trajectory, suggesting that the bounds on the iteration gain were not overly 
restrictive. Convergence for the faster trajectory is slower and seems to continue past 20 iterations. Such slower convergence indicates larger modeling errors at the higher frequencies. Also, the time variations in the system, which are neglected in the modeling, can be more significant, which also impacts convergence. 
The initial error using $u_0(t) = y_d(t)/G_0$ and the error for the final 
learned input $u_{20}(t)$ are compared in \cref{fig:error0f_1,fig:error0f_2} and 
the corresponding inputs are shown in 
\cref{fig:input-comparison_1,fig:input-comparison_2}. The maximum error values 
for trajectory 1 started at $$\left[ \max|\Delta\theta_1|, \max|\Delta\theta_2| 
\right] _{k=0} = [0.331,0.287] \SI{}{\radian},$$
and reduced to $$\left[\max|\Delta\theta_1|, 
\max|\Delta\theta_2|\right]_{k=20} = [0.029, 0.029]\SI{}{\radian}.$$ 
Similarly, the maximum error values for trajectory 2 started at 
$$\left[\max|\Delta\theta_1|, \max|\Delta\theta_2|\right]_{k=0} = 
[0.390,0.266]\SI{}{\radian},$$
and reduced to $$\left[\max|\Delta\theta_1|, \max|\Delta\theta_2|\right]_{k=20} 
= [0.062,0.044]\SI{}{\radian}.$$
These substantial error reductions between the initial and final output-trajectories, as shown in \cref{fig:OUTPUT-comparison_t1_n0,fig:OUTPUT-comparison_t1_n20,fig:OUTPUT-comparison_t2_n0,fig:OUTPUT-comparison_t2_n20}, 
confirm that the IML method can successfully learn the input to improve the tracking performance of the SEA robot.

\begin{figure}[tb] 
	\begin{center}
		\includegraphics[width=\columnwidth]{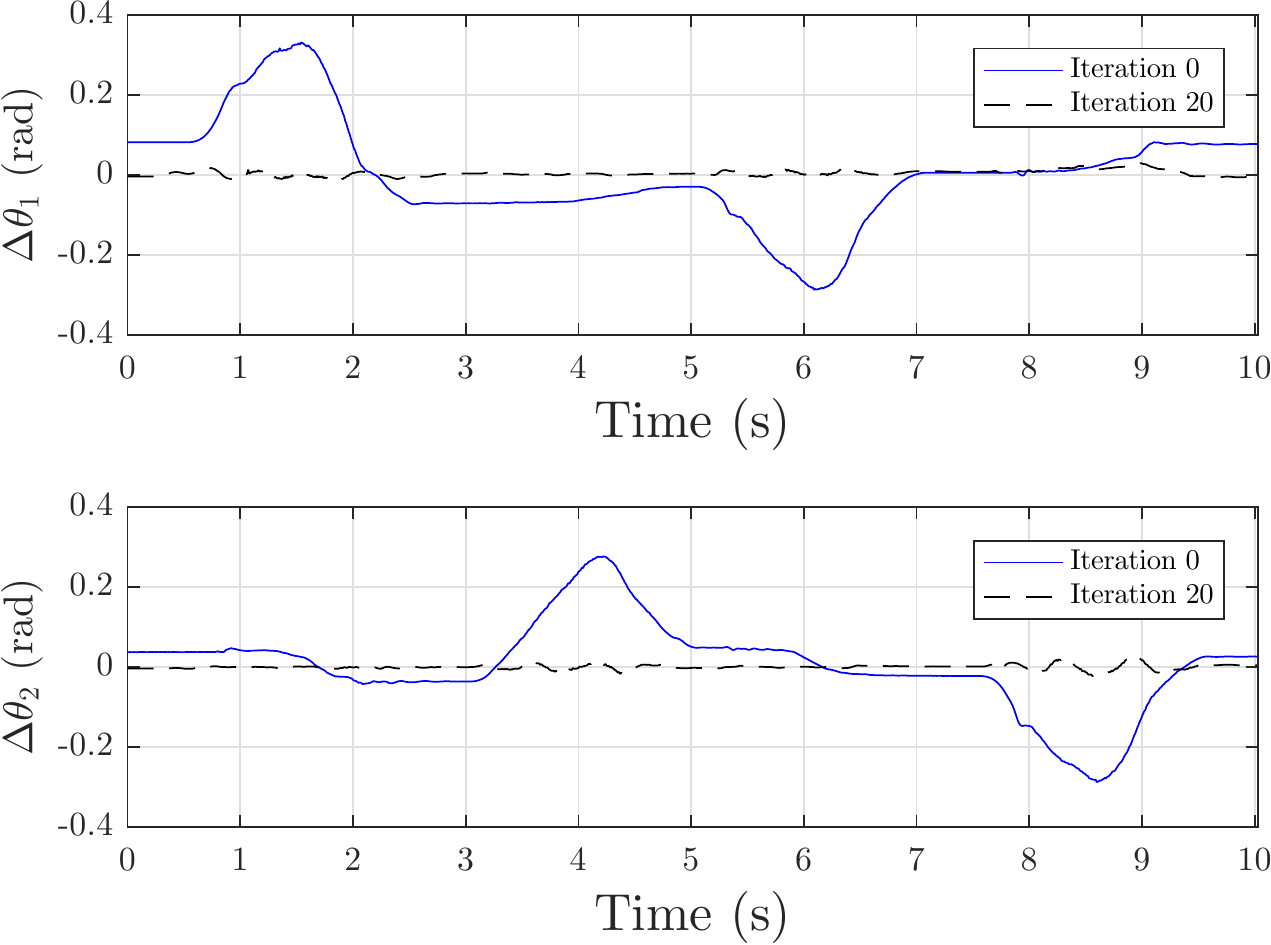}
	\end{center}
	\caption{Trajectory 1. Comparison of the initial ({\color{blue}---}) and final (- - -) 
	error signals on the two joint angles $\theta_1(t)$ and $\theta_2(t)$.}
	\label{fig:error0f_1}
\end{figure}
\begin{figure}[tb] 
	\begin{center}
		\includegraphics[width=\columnwidth]{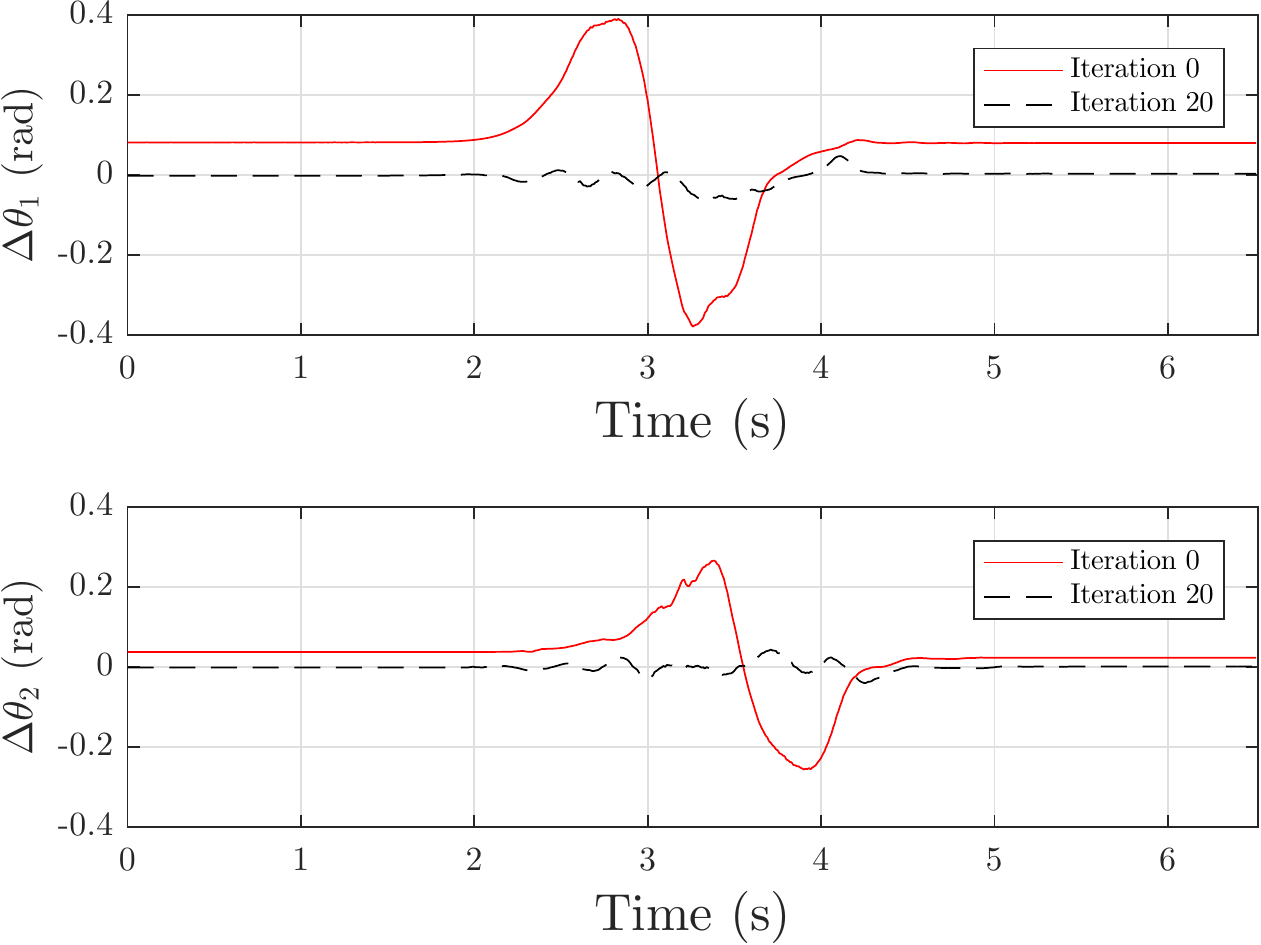}
	\end{center}
	\caption{Trajectory 2. Comparison of the initial ({\color{red}---}) and 
	final (- - -) error signals on the two joint angles $\theta_1(t)$ and 
$\theta_2(t)$.}
	\label{fig:error0f_2}
\end{figure}
\begin{figure}[tb] 
	\begin{center}
		\includegraphics[width=\columnwidth]{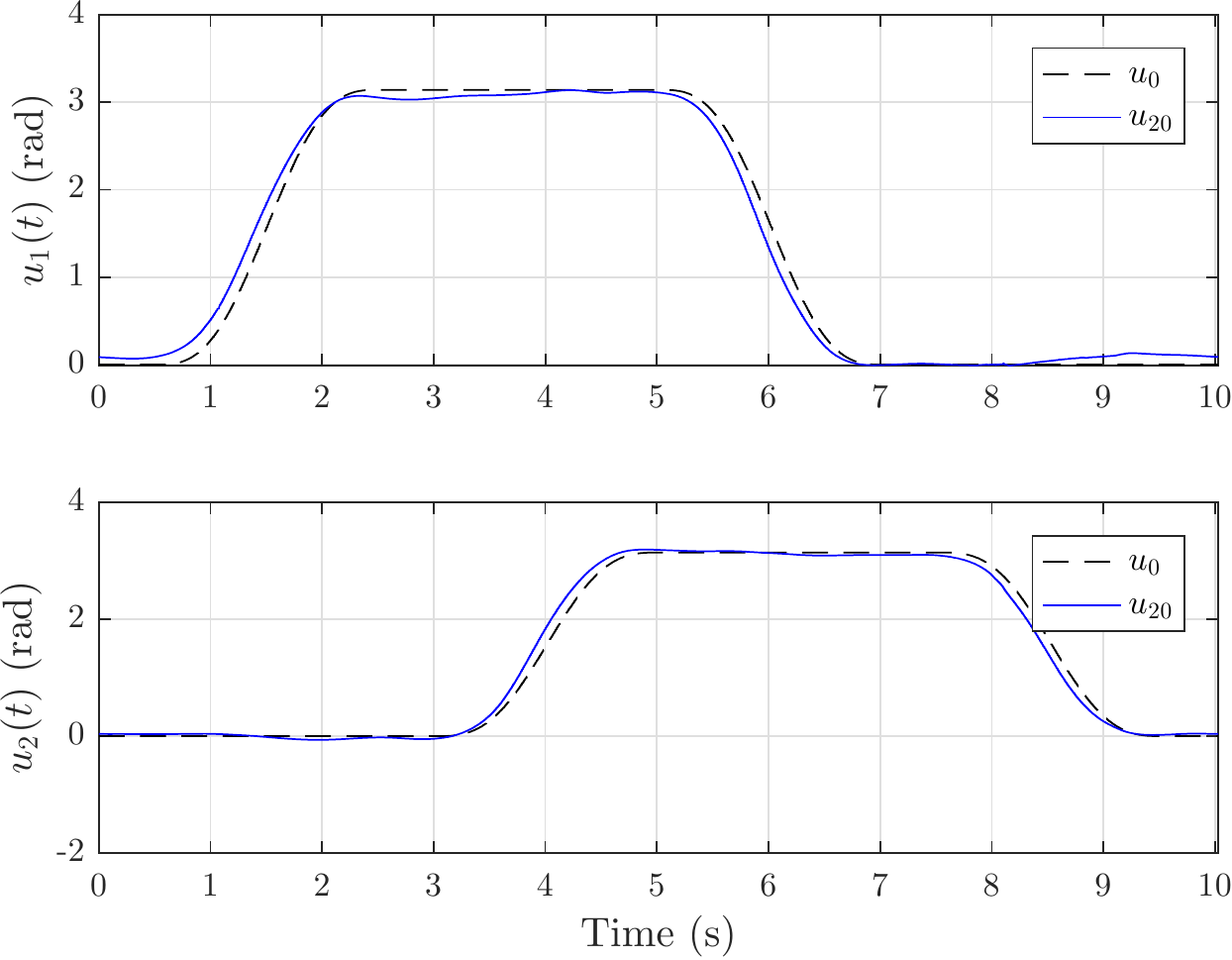}
	\end{center}
	\caption{Trajectory 1. Comparison of the initial input $u_0(t) = y_d/G_0$ (- - -) and the 
	final learned input $u_{20}(t)$ ({\color{blue}---}).}
	\label{fig:input-comparison_1}
\end{figure}
\begin{figure}[tb] 
	\begin{center}
		\includegraphics[width=\columnwidth]{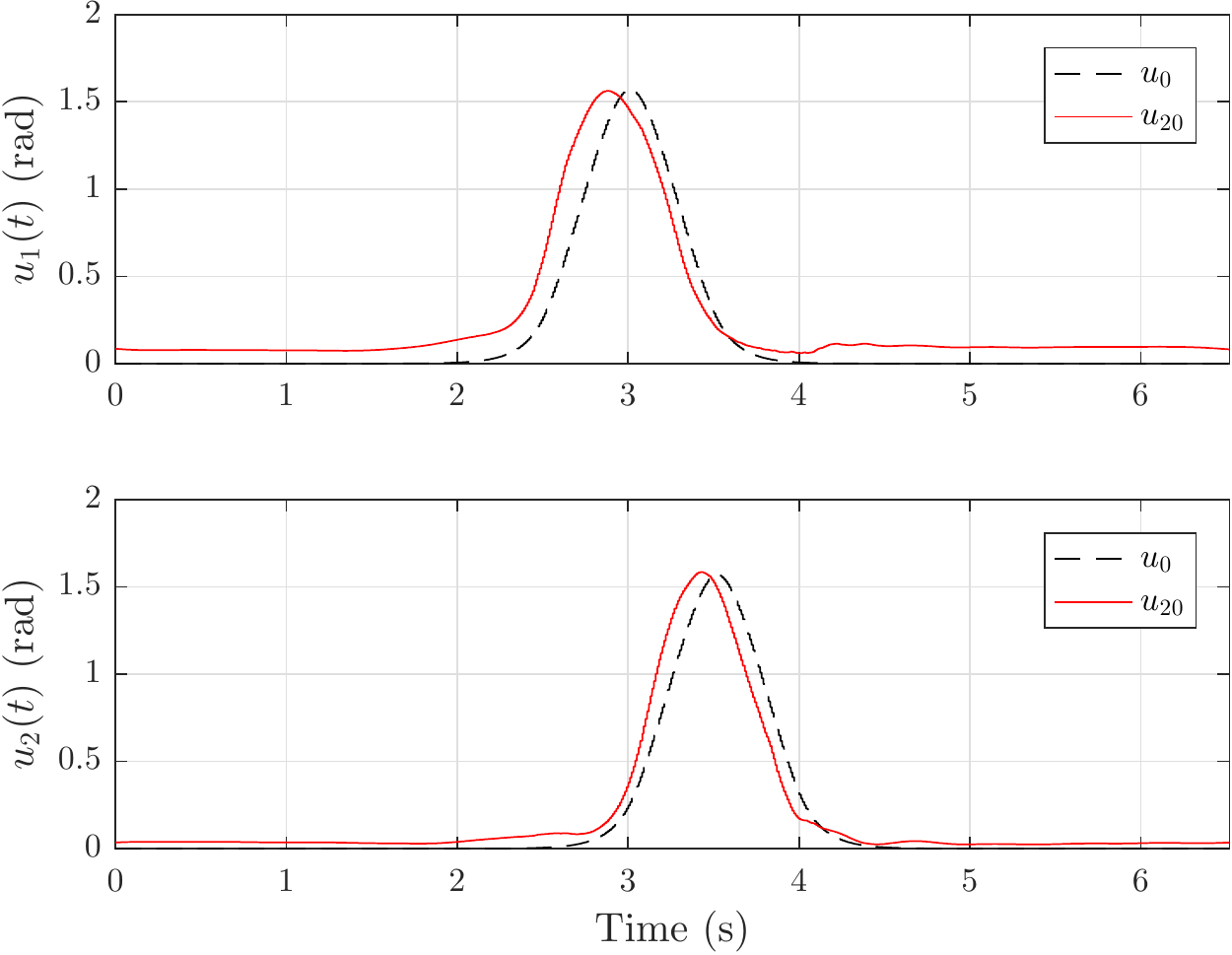}
	\end{center}
	\caption{Trajectory 2. Comparison of the initial input $u_0(t) = y_d/G_0$ (- - -) and the 
	final learned input $u_{20}(t)$ ({\color{red}---}).}
	\label{fig:input-comparison_2}
\end{figure}


\begin{figure} 
	\begin{center}
		\includegraphics[width=\columnwidth]{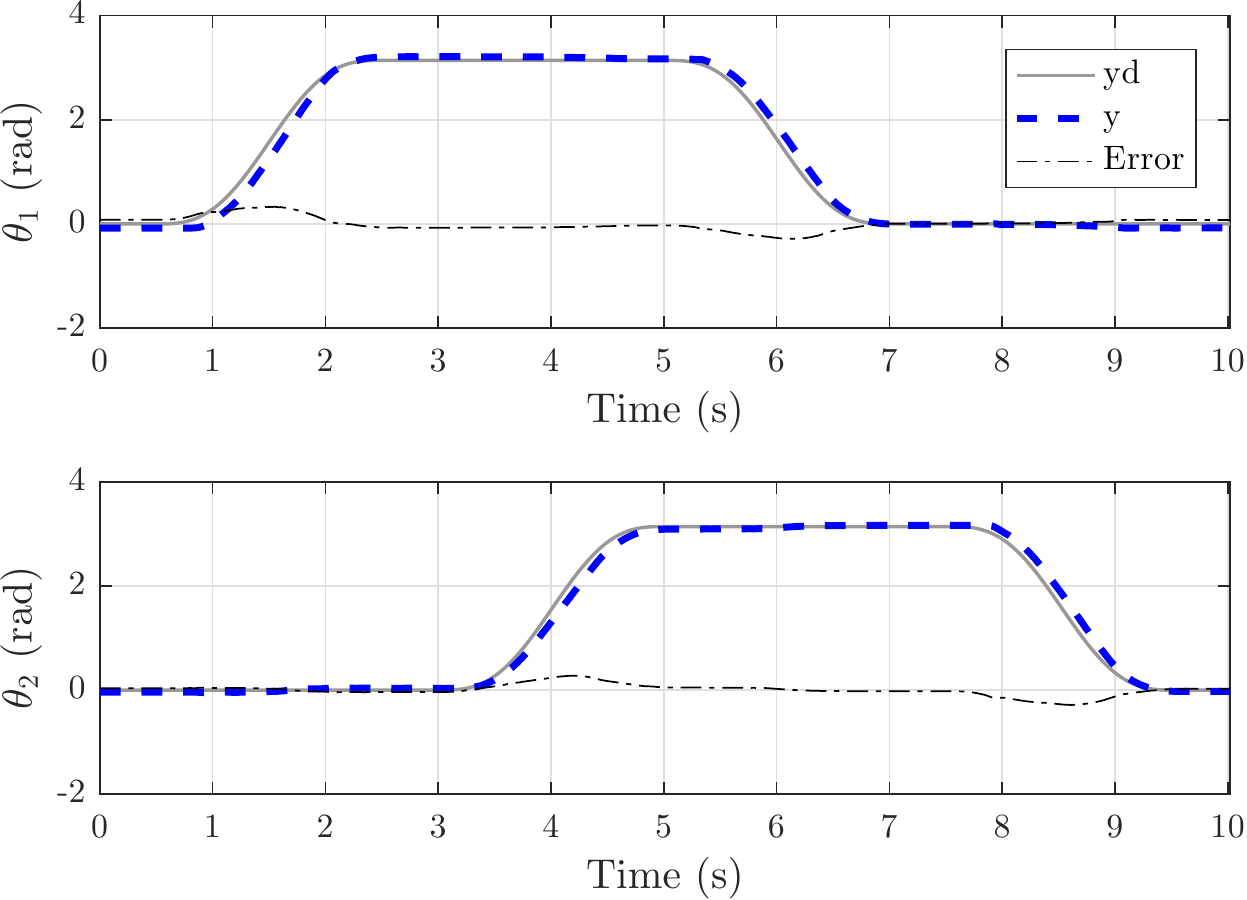}
	\end{center}
	\caption{Output Tracking Trajectory 1. Iteration 0. Comparison of desired trajectory, $y_d$, achieved trajectory, $y$, and error between them.}
	\label{fig:OUTPUT-comparison_t1_n0}
\end{figure}
\begin{figure} 
	\begin{center}
		\includegraphics[width=\columnwidth]{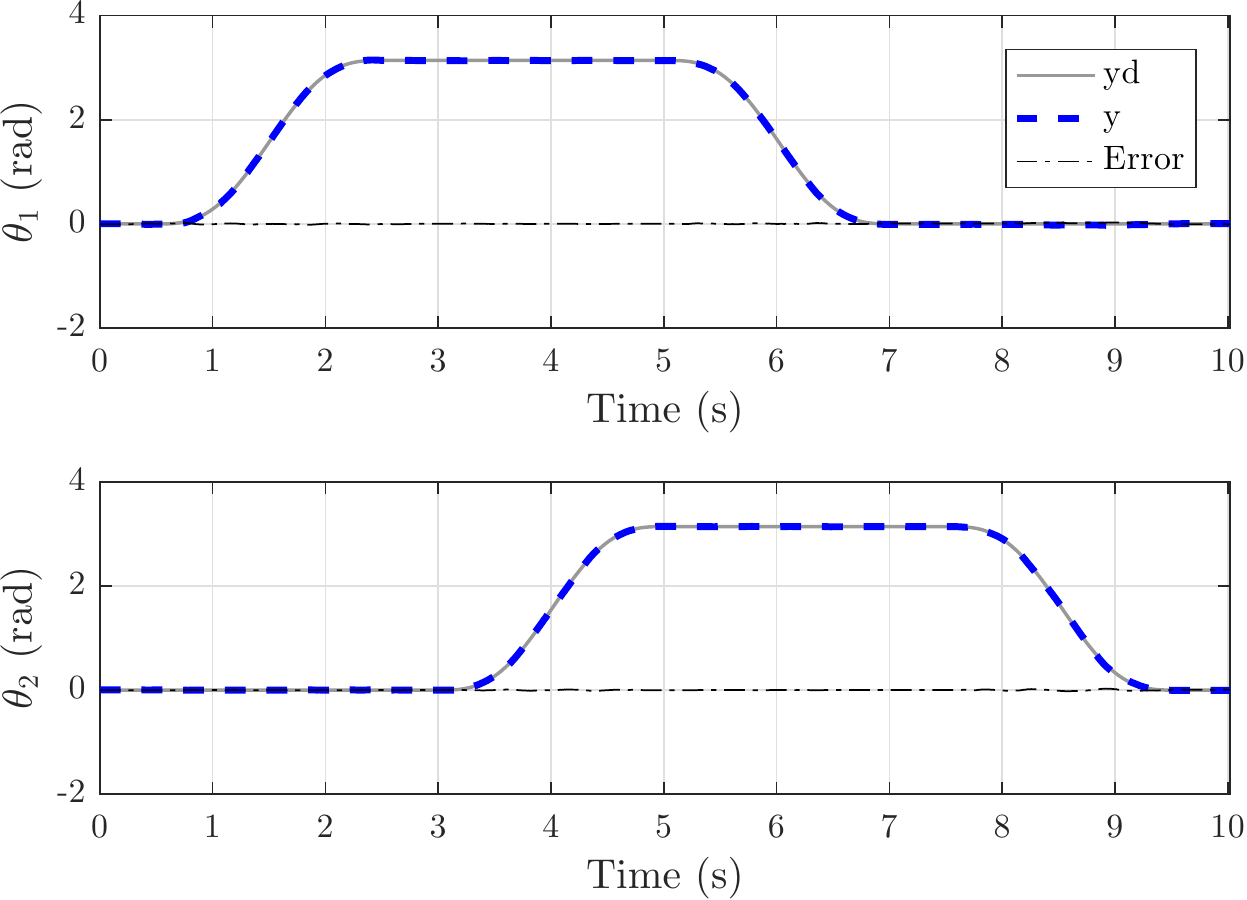}
	\end{center}
		\caption{Output Tracking Trajectory 1. Iteration 20. Comparison of desired trajectory, $y_d$, achieved trajectory, $y$, and error between them.}
	\label{fig:OUTPUT-comparison_t1_n20}
\end{figure}
\begin{figure} 
	\begin{center}
		\includegraphics[width=\columnwidth]{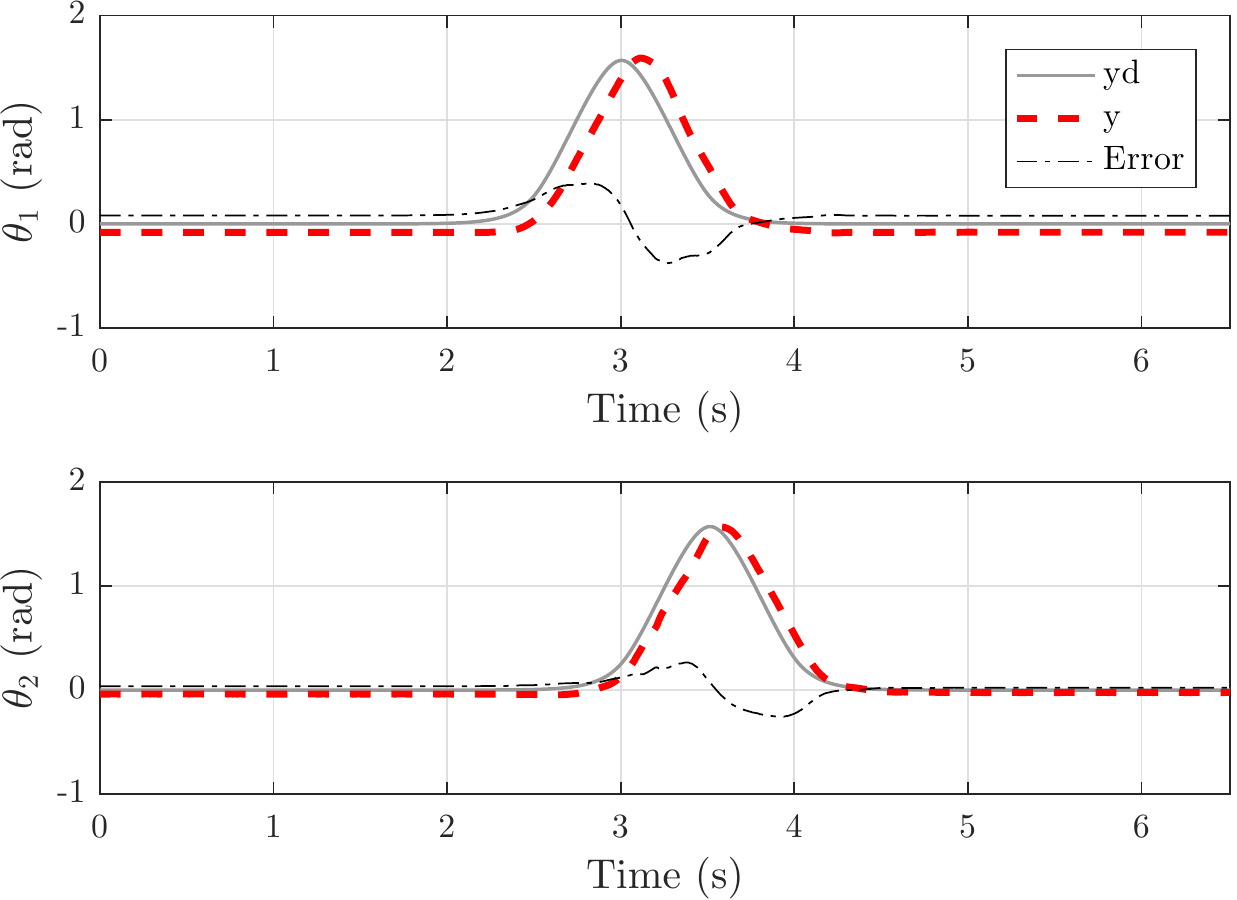}
	\end{center}
		\caption{Output Tracking Trajectory 2. Iteration 0. Comparison of desired trajectory, $y_d$, achieved trajectory, $y$, and error between them.}
	\label{fig:OUTPUT-comparison_t2_n0}
\end{figure}
\begin{figure} 
	\begin{center}
		\includegraphics[width=\columnwidth]{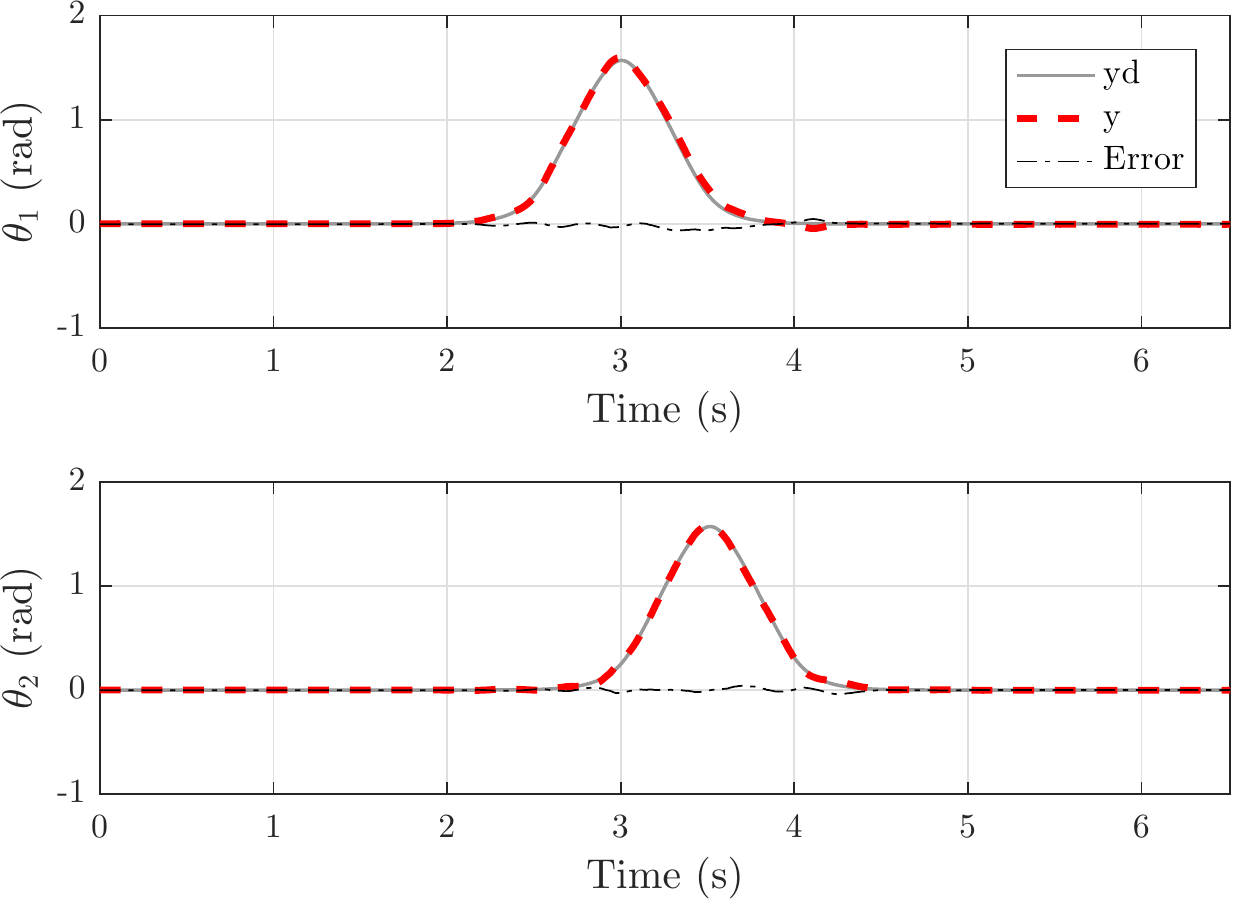}
	\end{center}
		\caption{Output Tracking Trajectory 2. Iteration 20. Comparison of desired trajectory, $y_d$, achieved trajectory, $y$, and error between them.}
	\label{fig:OUTPUT-comparison_t2_n20}
\end{figure}

\begin{figure}[tb] 
	\begin{center}
		\includegraphics[width=0.6\columnwidth]{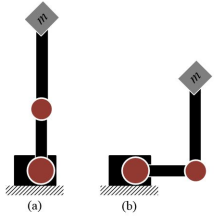}
	\end{center}
	\caption{Robot poses used in generating the bode plots in \cref{fig:model-bode_1,fig:model-bode_2}. (a) $\theta_1$ = $\pi$/2, $\theta_2$ = 0 and (b) $\theta_1$ = 0, $\theta_2$ = $\pi$/2. See \cref{fig:ArmDiagram} for more details.}
    \label{fig:robotPoses}
\end{figure}

\begin{figure} 
	\begin{center}
		\includegraphics[width=\columnwidth]{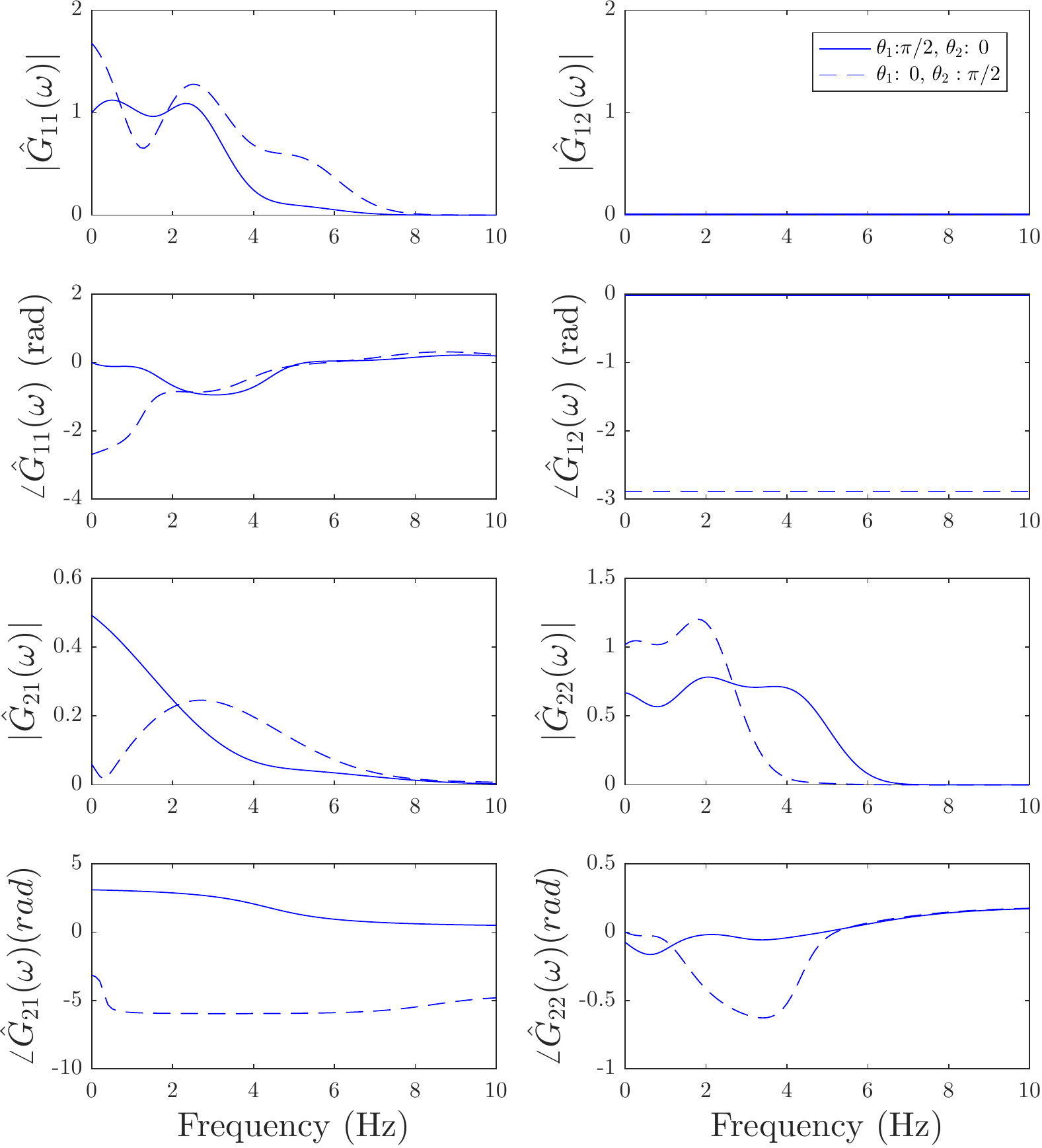}
	\end{center}
	\caption{Trajectory 1. Linearized model estimates of the system after 20 iterations at two different poses (sets of $\theta_1$ and $\theta_2$ positions). See \cref{fig:robotPoses} for poses. The eight plots show magnitude and phase of the MIMO system -- two inputs and two outputs.}
	\label{fig:model-bode_1}
\end{figure}

\begin{figure} 
	\begin{center}
		\includegraphics[width=\columnwidth]{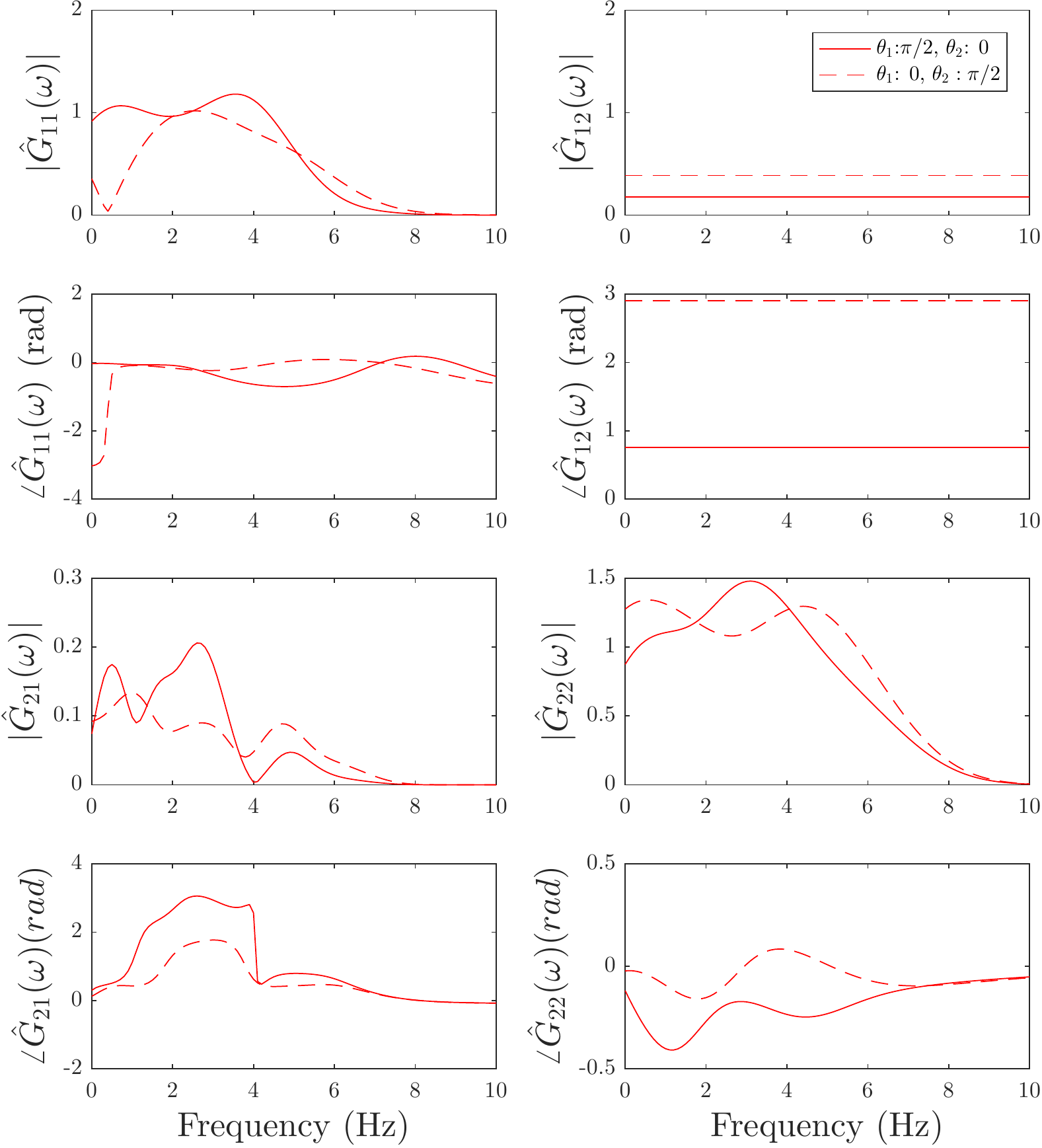}
	\end{center}
	\caption{Trajectory 2. Linearized model estimates of the system after 20 iterations at two different poses (sets of $\theta_1$ and $\theta_2$ positions). See \cref{fig:robotPoses} for poses. The eight plots show magnitude and phase of the MIMO system -- two inputs and two outputs. }
	\label{fig:model-bode_2}
\end{figure}





\section{Conclusions}
An iterative machine learning (IML) method was proposed that 
enables iterative control of control-affine nonlinear multiple-input 
multiple-output systems with unknown dynamics.  Conditions on local convergence were 
presented and the method was tested on a two-link robotic arm driven by series 
elastic actuators. 
The iterative machine learning approach converged to an input with a maximum error in the joint angles of 1\% and 4\% of 
the range of motion for trajectories 1 and 2, respectively.   
Current efforts are aimed at developing global conditions of the 
rate of acceptable trajectory variations to ensure convergence of the proposed 
iterative approach with the localized models. 





\end{document}